\numberwithin{equation}{section}
\def\eps{\varepsilon}
\DeclareMathOperator{\sech}{sech}
\newcommand{\ee}{{\mathbb{E}\,}}
\newtheorem{theorem}{Theorem}[section]  
\newtheorem{cor}[theorem]{Corollary}
\newtheorem{prop}[theorem]{Proposition}
\newtheorem{lemma}[theorem]{Lemma}
\newtheorem{remark}[theorem]{Remark}
        \let\k=\kappa     
                          \let\r=r
\begin{document}

\title{Dynamical Approach to the TAP Equations for the Sherrington-Kirkpatrick Model}

\author{Arka Adhikari\footnote{adhikari@math.harvard.edu}, Christian Brennecke\footnote{brennecke@math.harvard.edu}, Per von Soosten\footnote{vonsoosten@math.harvard.edu}, Horng-Tzer Yau\footnote{htyau@math.harvard.edu}
\\
\\
Department of Mathematics, Harvard University, \\
One Oxford Street, Cambridge MA 02138, USA$$ \\
\\}

\date{February 19, 2021}  

\maketitle

\begin{abstract}
We present a new dynamical proof of the Thouless--Anderson--Palmer (TAP) equations for the classical Sherrington-Kirkpatrick spin glass at sufficiently high temperature. In our derivation, the TAP equations are a simple consequence of the decay of the two point correlation functions. The methods can also be used to establish the decay of higher order correlation functions. We illustrate this by proving a suitable decay bound on the three point functions from which we derive an analogue of the TAP equations for the two point functions.
\end{abstract}

\section{Introduction}

We consider systems of $N$ spins $ \sigma_i$, $i\in\{1,\dots, N\}$, taking values in $\{-1,1\}$. The Hamiltonian $H_N:\{-1,1\}^N\to\mathbb{R}$ of the system is defined by
		\begin{equation} \label{eq:HN}
		H_N(\sigma)= H_N(\sigma_1,\dots,\sigma_N) = \sum_{1\leq i< j\leq N} g_{ij} \sigma_{i}\sigma_j + h\sum_{i=1}^N  \sigma_i,
		\end{equation}
where the couplings $ \{g_{ij}\}$ are i.i.d. Gaussians of variance $t/N$ and $h\in\mathbb{R}$ denotes the external field strength. For definiteness, we also set $g_{ii}=0$ for all $i\in\{1,\dots, N\}$. In our setup $t = \beta^2$ plays the role of the inverse temperature, but the present notation will be more natural in the dynamical context we consider in the sequel.

The Hamiltonian \eqref{eq:HN} corresponds to the classical Sherrington--Kirkpatrick (SK) spin glass model \cite{SK}. The understanding of basic thermodynamic quantities of this model has required significant efforts by many physicists and mathematicians. In particular, the famous Parisi formula \cite{Par1, Par2} for the free energy in the thermodynamic limit was proved by Guerra \cite{Gue} and Talagrand \cite{Tal}. Later, the ultrametricity \cite{Pan1} was established by Panchenko \cite{Pan3} for generic models. We refer to the standard works \cite{MPV, Tal1, Tal2, Pan2} for a thorough introduction to the SK and more general spin glass models and for a comprehensive list of references.

In this paper, we are concerned with the magnetizations and two-point correlation functions defined by
		\[m_i = \langle  \sigma_i \rangle, \hspace{0.5cm} m_{ij} = \langle \sigma_i \sigma_j \rangle - \langle \sigma_i \rangle \langle \sigma_j \rangle,\]
where
		\[\langle f \rangle  = \frac1{Z_N} \sum_{\sigma\in\{-1,1\}^N} f(\sigma)\, e^{H_N(\sigma)}, \hspace{0.5cm} Z_N =  \sum_{\sigma\in\{-1,1\}^N}  e^{H_N(\sigma)}\]
denotes the Gibbs expectation. At high temperature, the Thouless--Anderson--Palmer (TAP) equations \cite{TAP} predict that the magnetizations satisfy the system of self-consistent equations
\begin{equation}\label{eq:TAPsimeq} m_i \approx \tanh\Big( h + \sum_{k\neq i} g_{ik}m_k - t(1-q)m_i\Big)\end{equation}
in a sense that will be made precise later. In~\eqref{eq:TAPsimeq},  $q =q(t,h)$ is the solution of the fixed-point equation $q = \mathbb{E} \tanh^2(\sqrt{tq} Z +h)$ where $Z \sim \mathcal{N}(0,1)$ is a standard Gaussian random variable. Physically, the value $q\in [0;1]$ corresponds to the limiting value of the overlap distribution in the replica-symmetric high temperature regime. The overlap $ R_{1,2}:\{-1,1\}^N\times \{-1,1\}^N\to \mathbb{R}$ is defined by
		\[R_{1,2}(\sigma^1, \sigma^2) = \frac1N\sum_{i=1}^N \sigma_i^1\sigma_i^2.\]
Its distribution under $ \mathbb{E} \langle\cdot\rangle\otimes \langle\cdot\rangle$ is the functional order parameter of the system in the thermodynamic limit $N\to\infty$ . At sufficiently high temperature, the overlap distribution is expected to concentrate on a single point. 
In fact, for $t<1/4$, one can prove that the overlap concentrates exponentially, which is a key input for a detailed mathematical understanding of the Gibbs measure at high temperature (see \cite[Sections 1.4 to 1.11]{Tal1}).

The validity of the TAP equations \eqref{eq:TAPsimeq} at high temperature has been established by Talagrand \cite{Tal0, Tal1} and Chatterjee \cite{Cha}. Both works rely on the concentration of the overlap as a key ingredient in the proof. More recently, Bolthausen \cite{Bolt1, Bolt2} constructed an iterative solution of the TAP equations in the full high temperature regime and used it to provide a new proof of the replica-symmetric formula for the free energy at sufficiently high temperature. In \cite{ChTa}, Chen and Tang proved that Bolthausen's scheme indeed approximates the magnetizations of the SK model, assuming locally uniform concentration of the overlap. At low temperature, Auffinger and Jagannath \cite{AuJag} proved a version of the TAP equations for generic mixed $p$-spin models. In this case, the overlap is not a constant anymore, but one can decompose the hypercube into clusters (``pure states") within which the overlap remains approximately constant. Then, the TAP equations remain valid conditionally on each cluster (see \cite{AuJag} for more precise details).

An interesting open problem is to prove the replica-symmetry of the SK model in the full high temperature regime predicted by de Almeida and Thouless \cite{AT}. The system is believed to be replica-symmetric for all $(t,h)$ that satisfy
		\begin{equation}\label{eq:ATline}  \mathbb{E} \frac{t}{\cosh^4(\sqrt{tq}Z+h)} <1, \end{equation}
where $q = \mathbb{E} \tanh^2(\sqrt{tq} Z +h)$ and $Z \sim\mathcal{N}(0,1)$ as above. In particular, the TAP equations \eqref{eq:TAPsimeq} are believed to be valid under the AT condition \eqref{eq:ATline}. So far, replica-symmetry is known above the AT line up to a bounded region in the $(t,h)$-phase diagram. This has been proved in \cite{JagTob} through an analysis of the Parisi variational problem.

The goal of this work is to present a new proof of the TAP equations that relies on a direct dynamical approach by viewing the couplings $g_{ij}$ as Brownian motions running at speed $1/N$. After applying It\^o's lemma to the magnetizations, this point of view leads naturally to a dynamical study of the two point functions $ m_{ij}$. For sufficiently high temperature, we prove suitable decay bounds on the $m_{ij}$ from which the TAP equations follow with explicit error bounds as a simple corollary. Our approach extends to higher order correlation functions in a straightforward way. In particular, we prove an analogue of the TAP equations for the two point functions which provides a simple heuristic connection to the AT condition \eqref{eq:ATline}. For this reason, we hope that a dynamical approach will contribute to an improved understanding of the high temperature regime.

Tools from stochastic calculus have provided useful insights into the probabilistic structure of the SK model in the past. Comets and Neveu \cite{CoNe} gave an elegant new proof of the fundamental high temperature results of Aizenman, Lebowitz and Ruelle \cite{AizLeRue} in the absence of an external field by representing the partition function as a suitable stochastic exponential and invoking a martingale central limit theorem. Moreover, the interpolation method of Guerra, whose core mechanism is based on Gaussian integration by parts, can also be rewritten dynamically in terms of It\^{o}'s lemma. The paper of Tindel \cite{Tin} combines the previous two perspectives to extend the central limit theorem for the free energy to a region with positive external field strength. In contrast to these works, our present approach directly tracks the evolution of the magnetization and higher order correlation functions as the coupling strengths between one particle and the others are gradually increased. This approach gives rise to the TAP equations in a natural fashion and makes the corresponding computations for the higher order correlation functions systematic and tractable.

For the statement of our main results, let $ m_k^{(i)}$ and $m_{kl}^{(i)}$ denote the magnetizations and two point correlation functions, respectively, after the $i$-th particle $\sigma_i$ has been removed from the $N$-spin system (see the next section for a precise definition). Our main result describes the validity of a hierarchical version of the TAP equations (also called the cavity equations) for all $0\leq t<\log 2$ in the sense of $L^2(\mathbb{P})$.

\begin{theorem}\label{thm:main}
Let $0\leq t<\log 2$. Then, there exists a constant $C=C_t>0$, independent of $N\in\mathbb{N}$, such that	
		\begin{equation}\label{eq:hTAP1} \mathbb{E} \bigg[m_i - \tanh\Big( h + \sum_{j\neq i} g_{ij}m_j^{(i)}\Big)\bigg]^2\leq \frac{C}N. \end{equation}
Moreover, for all $\epsilon>0$ sufficiently small and $i\neq j$, there exists $C=C_{t,\epsilon}>0$ such that
		\begin{equation}\label{eq:hTAP2}  \mathbb{E} \bigg[ m_{ij} -  \bigg(1- \tanh^2\Big(h + \sum_{k\neq i} g_{ik}m_k^{(i)}\Big)\bigg) \sum_{l\neq i} g_{il}m_{lj}^{(i)}  \bigg]^2\leq \frac{C}{N^{1+\epsilon}}.  \end{equation}
\end{theorem}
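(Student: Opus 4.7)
The plan is to derive both TAP equations from exact cavity identities obtained by summing out $\sigma_i$ in the Gibbs measure, and to control the resulting errors by combining a two-replica trick with decay estimates on the cavity correlation functions; these decay estimates are in turn provided by the dynamical It\^o approach advertised in the introduction. Writing $H_N=H_N^{(i)}+\sigma_i(h+\sum_{j\ne i}g_{ij}\sigma_j^{(i)})$ and summing $\sigma_i\in\{-1,+1\}$ out yields $m_i = \langle\sinh(\bar h_i+\xi_i)\rangle^{(i)}/\langle\cosh(\bar h_i+\xi_i)\rangle^{(i)}$ with $\bar h_i := h+\sum_{j\ne i}g_{ij}m_j^{(i)}$ and $\xi_i := \sum_{j\ne i}g_{ij}(\sigma_j^{(i)}-m_j^{(i)})$. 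Straightforward algebra (using $\cosh^2-\sinh^2=1$) rewrites this as
\[
m_i - \tanh\bar h_i = \frac{\langle\sinh\xi_i\rangle^{(i)}}{\cosh\bar h_i\cdot\langle\cosh(\bar h_i+\xi_i)\rangle^{(i)}},
\]
whose denominator is $\ge 1$, so \eqref{eq:hTAP1} reduces to showing $\mathbb{E}\langle\sinh\xi_i\rangle^{(i) 2}\le C/N$. An analogous computation with two independent cavity replicas and the identity $\sinh(a)\cosh(b)-\cosh(a)\sinh(b)=\sinh(a-b)$ produces $m_{ij}\langle\cosh(\bar h_i+\xi_i)\rangle^{(i) 2}=\tfrac12\langle(\sigma_j^1-\sigma_j^2)\sinh(\xi_i^1-\xi_i^2)\rangle^{(i)\otimes 2}$, reducing \eqref{eq:hTAP2} to a comparable comparison after linearization.

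To bound $\mathbb{E}\langle\sinh\xi_i\rangle^{(i) 2}$ I would introduce two cavity replicas and average over the couplings $g_{ij}$, which are independent of the cavity Gibbs measure. The Gaussian integration closes exactly:
\[
\mathbb{E}_g\bigl[\sinh\xi_i^1\sinh\xi_i^2\bigr]=\exp\!\Big(\tfrac{t}{2N}\sum_j(\alpha_j^2+\beta_j^2)\Big)\sinh\!\Big(\tfrac{t}{N}\sum_j\alpha_j\beta_j\Big),
\]
with $\alpha_j=\sigma_j^{(i),1}-m_j^{(i)}$ and $\beta_j=\sigma_j^{(i),2}-m_j^{(i)}$. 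Decomposing the exponent into a deterministic part $a_0=\tfrac{t}{N}\sum_j(1+(m_j^{(i)})^2)$ and the centered part $A_1=-\tfrac{t}{N}\sum_jm_j^{(i)}(\sigma_j^1+\sigma_j^2)$, and expanding both $\sinh$ and $e^{A_1}$ in power series, the $O(1)$ and $O(A_1)$ cross terms vanish because replicas are independent and the $\alpha_j,\beta_j$ are cavity-centered. The first non-vanishing contribution equals
\[
\tfrac12\Big\langle A_1^2\cdot\tfrac{t}{N}\sum_j\alpha_j\beta_j\Big\rangle^{(i)\otimes 2}=\frac{t^3}{N^3}\sum_l\Big(\sum_jm_j^{(i)}m_{jl}^{(i)}\Big)^2,
\]
which is $O(1/N)$ after taking $\mathbb{E}$, using Cauchy--Schwarz together with the two-point decay $\mathbb{E}(m_{jk}^{(i)})^2\le C/N$ for $j\ne k$. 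The cubic $\sinh$ remainder and higher powers of $A_1$ are controlled by the same inputs, yielding \eqref{eq:hTAP1}.

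The needed two-point decay is exactly where the dynamical It\^o method enters: treating the couplings as Brownian motions at speed $1/N$, one writes the SDE for $m_{jk}(s)^2$, reads off a closed differential inequality whose drift and quadratic-variation terms are bounded by simple Gibbs correlators, and propagates the vanishing initial condition by Gronwall whenever $t<\log 2$. For \eqref{eq:hTAP2}, the same linearization of $\sinh(\xi_i^1-\xi_i^2)$ in the two-replica identity yields $\sum_kg_{ik}m_{jk}^{(i)}$ as leading contribution, while the $\sech^2\bar h_i$ prefactor emerges from expanding $\langle\cosh(\bar h_i+\xi_i)\rangle^{(i) 2}$ together with the cubic correction of $\sinh$ after careful cancellations. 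The residual error is dominated by third-order cavity cumulants $\langle\prod_{a=1}^3(\sigma_{k_a}^{(i)}-m_{k_a}^{(i)})\rangle^{(i)}$, and the main obstacle is to establish a refined decay $\mathbb{E}(m_{k_1k_2k_3}^{(i)})^2\le CN^{-2-\epsilon}$ for the three-point functions via an It\^o/Gronwall argument analogous to, but significantly more intricate than, the two-point case. The SDE for $m_{k_1k_2k_3}$ involves four-point cross-terms that must be bounded with quantitative strength, and this appears to be what forces the small loss $\epsilon>0$ in the exponent $N^{-1-\epsilon}$.
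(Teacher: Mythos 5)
Your derivation of \eqref{eq:hTAP1} takes a genuinely different route from the paper. The exact identity $m_i-\tanh\bar h_i=\langle\sinh\xi_i\rangle^{(i)}/\big(\cosh\bar h_i\,\langle\cosh(\bar h_i+\xi_i)\rangle^{(i)}\big)$ with denominator $\geq 1$ is correct, and the two-replica Gaussian integration over the cavity row, with the $O(1)$ and $O(A_1)$ terms vanishing by centering, does reduce everything to $\mathbb{E}\sum_l\big(\sum_j m_j^{(i)}m_{jl}^{(i)}\big)^2/N^3 = O(1/N)$ plus controllable remainders. This is a Talagrand-style cavity argument; the paper instead applies It\^o's lemma directly to $\tanh^{-1}(m_i)-\big(h+\sum_j g_{ij}m_j^{(i)}\big)$ and never integrates out the row $g_{i\bullet}$, which keeps every error term expressed through $\delta_i m_j^{[i]}$ and avoids resumming the $\sinh$ and $e^{A_1}$ series. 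Both routes consume the same input, $\sup_A\mathbb{E}\big(m_{jk}^{[A]}\big)^2\leq C/N$. However, your one-sentence account of how that input is obtained misses the actual mechanism: the It\^o expansion of $m_{jk}^2$ is \emph{not} a closed differential inequality (it produces three-point functions and sums over a third index of products of two-point functions). The paper closes it only via the conditional identity $m_{ij}^{[A]}=\big[1-(m_i^{[A]})^2\big]\delta_i m_j^{[A\cup\{i\}]}$ and an iteration over growing conditioning sets $A\cup\{i,k_1,\dots,k_n\}$, run up to $N$ times; the threshold $t<\log 2$ arises from summing the resulting series, not from a single Gronwall step.

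For \eqref{eq:hTAP2} your two-replica identity is correct, but the sketch has a genuine gap at its crux. The denominator $\langle\cosh(\bar h_i+\xi_i)\rangle^{(i)2}$ exceeds $\cosh^2\bar h_i$ by the multiplicative factor $\langle\cosh\xi_i\rangle^{(i)2}=1+t(1-q)+\dots$, which is $O(1)$ away from $1$, while the linearization of $\sinh(\xi_i^1-\xi_i^2)$ alone yields $\sum_k g_{ik}m_{kj}^{(i)}$; since the target error is $o(N^{-1/2})$ and $\sum_k g_{ik}m_{kj}^{(i)}$ is itself of order $N^{-1/2}$, you must show that the higher odd powers of $\xi^1-\xi^2$ resum to exactly the compensating factor $\langle\cosh(\xi^1-\xi^2)\rangle^{(i)\otimes 2}\approx\langle\cosh\xi_i\rangle^{(i)2}$ up to $o(1)$ relative error. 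This ``careful cancellation'' is asserted, not proved, and making it rigorous forces you to control fourth-order connected cavity correlations (they appear in $\langle(\sigma_j^1-\sigma_j^2)(\xi^1-\xi^2)^3\rangle$), which your proposal never mentions. Separately, the three-point bound you say you need, $\mathbb{E}\big(m_{k_1k_2k_3}^{(i)}\big)^2\leq CN^{-2-\epsilon}$, is both stronger than necessary (each three-point function enters with at least two factors of $g_{i\cdot}$, so $N^{-1-\epsilon}$ suffices after summing) and stronger than the It\^o/Gronwall machinery delivers for all $t<\log2$: the paper's Lemma \ref{lm:L2bndmijk} only gives $CN^{-1-\epsilon/2}$, precisely because it bounds three-point functions by two-point functions through \eqref{eq:condid2}. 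As written, your plan for \eqref{eq:hTAP2} therefore rests on an unproved cancellation and on a correlation estimate that is likely out of reach of the proposed method.
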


We point out that equation~\eqref{eq:hTAP1} for the magnetizations has been studied before in~\cite[Lemma 1.7.4]{Tal1}, where a similar bound is proved for $t<1/4$. In fact,~\eqref{eq:hTAP1} is what one would expect from the classical heuristic
\[m_i = \frac{\Big\langle \sinh\left(h + \sum_{j \neq i} g_{ij} \sigma_j \right) \Big\rangle^{(i)}}{\Big\langle \cosh \left(h + \sum_{j \neq i} g_{ij} \sigma_j \right)\Big\rangle^{(i)}} \approx \tanh \bigg(h + \sum_{j \neq i} g_{ij} m_j^{(i)}  \bigg)\]
for a mean-field ferromagnet, which is correct (at least) when the spins are approximately independent under the Gibbs measure. However, unlike the ferromagnetic case, the typical size of the couplings $g_{ij} = \mathcal{O}(N^{-1/2})$ and the correlations between $g_{ij}$ and $m^{(i)}_j$ prohibit one from obtaining the classical mean--field equations by inserting the heuristic $m^{(i)}_j \approx m_j$. Instead, this substitution results in the Onsager correction $ t(1-q)m_i$ in the TAP equations. The significance of \eqref{eq:hTAP1} and \eqref{eq:hTAP2} is that they display the leading order dependence of $m_i$ and $m_{ij}$ on the $i$-th column $(g_{ik})_{1\leq k\leq N}$ of the interaction. Notice that, on a heuristic level, the equations \eqref{eq:hTAP2} for the $m_{ij}$ follow simply by differentiation of the TAP equations \eqref{eq:hTAP1} for the $m_i$ with respect to the external field. Alternatively, \eqref{eq:hTAP2} can also be derived using a cavity field heuristic, see \cite[Section V.3]{MPV}.

As already observed in \cite[Section V.3]{MPV}, it is interesting to note that the hierarchical TAP equations for the one and two point functions have a simple connection to the AT condition \eqref{eq:ATline}. To see this, let us assume that
		\[ q_N = \frac1N\sum_{k=1}^N m_k^2 \approx \frac1N\sum_{k=1}^N \big(m_k^{(i)}\big)^2 = q_N^{(i)},\]
which follows from the decay of correlations and let us assume in addition that 
		\begin{equation}\label{eq:concass} q_N = \frac1N\sum_{k=1}^N m_k^2 \approx \mathbb{E} \frac1N\sum_{k=1}^N m_k^2. \end{equation}
Notice that this concentration assumption is reasonable since
		\[ q_N = \frac1N \sum_{k=1}^N m_k^2 = \langle R_{1,2}\rangle.\]
We then conclude from the TAP equations \eqref{eq:hTAP1} and \eqref{eq:concass} that
		\[
		q_N \approx \mathbb{E} \tanh^2 \Big( h + \sum_{j\neq i} g_{ij}m_j^{(i)}\Big) = \mathbb{E}\tanh^2\Big(h +\sqrt{t q_N^{(i)}} Z_i\Big)\approx\mathbb{E}\tanh^2(h +\sqrt{t q_N} Z_i)
		\]
for the standard Gaussian $Z_i=\big(t q_N^{(i)}\big)^{-1/2} \sum_{k\neq i} g_{ik}m_k^{(i)} \sim \mathcal{N}(0,1)$. Hence, we expect that $q_N\approx q $ is close to the unique fixed point $ q = \mathbb{E}\tanh^2(h +\sqrt {tq}Z)$. Based on Theorem \ref{thm:main}, we will make this rigorous and prove the following concentration result.
\begin{prop}\label{prop:concentration}
Let $ 0\leq t<\log 2$ and let $ q = \mathbb{E}\tanh^2(h +\sqrt {tq}Z)$, where $Z\sim \mathcal{N}(0,1)$ denotes a standard Gaussian random variable. Let $q_N = N^{-1}\sum_{k=1}^N m_k^2$, then there exists a constant $C=C_t>0$ such that
		\begin{equation}\label{eq:concentration}\mathbb{E}\, | q - q_N|^2 \leq \frac{C}{N^{1/2}}.\end{equation}
\end{prop}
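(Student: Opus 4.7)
The plan is to combine Theorem~\ref{thm:main} with a cavity conditioning argument to establish the approximate fixed-point equation $q_N\approx\Phi(q_N)$ in $L^2(\mathbb{P})$ for
\[
\Phi(r) \;:=\; \mathbb{E}\tanh^2(h+\sqrt{tr}\,Z),\qquad Z\sim\mathcal{N}(0,1),
\]
and then to use contractivity of $\Phi$ at $t<1$ to deduce $\mathbb{E}|q_N-q|^2\leq C/N^{1/2}$.

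Since $|m_i|,|\tanh(\cdot)|\leq 1$, the identity $a^2-b^2=(a-b)(a+b)$ applied to~\eqref{eq:hTAP1} yields $\mathbb{E}(m_i^2-\tanh^2(h+\xi_i))^2\leq C/N$ with $\xi_i:=\sum_{j\neq i}g_{ij}m_j^{(i)}$. Conditioning on the cavity $\sigma$-algebra $\mathcal{G}_i:=\sigma(\{g_{kl}:k,l\neq i\})$, we have $\xi_i\mid\mathcal{G}_i\sim\mathcal{N}(0,tq_N^{(i)})$ with $q_N^{(i)}:=N^{-1}\sum_{j\neq i}(m_j^{(i)})^2$, hence $\mathbb{E}[\tanh^2(h+\xi_i)\mid\mathcal{G}_i]=\Phi(q_N^{(i)})$. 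Averaging over $i$, the key step is to bound the $L^2$-norm of $N^{-1}\sum_i\eta_i$ where $\eta_i:=\tanh^2(h+\xi_i)-\Phi(q_N^{(i)})$ satisfies $\mathbb{E}[\eta_i\mid\mathcal{G}_i]=0$. Introducing the analogue $\eta_j^{(i)}$ in the $(N-1)$-spin system with $\sigma_i$ removed, which is $\mathcal{G}_i$-measurable, I would bound the off-diagonal covariances via
\[
|\mathbb{E}\eta_i\eta_j| \;=\; |\mathbb{E}\eta_i(\eta_j-\eta_j^{(i)})| \;\leq\; 2\sqrt{\mathbb{E}(\eta_j-\eta_j^{(i)})^2} \;\leq\; CN^{-1/2},
\]
using the high-temperature cavity stability $\mathbb{E}(m_k-m_k^{(i)})^2\leq C/N$. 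Summing over $i,j$ yields $\mathbb{E}(N^{-1}\sum_i\eta_i)^2\leq C/N^{1/2}$, and combined with Lipschitz continuity of $\Phi$ and $\mathbb{E}(q_N-q_N^{(i)})^2\leq C/N$ (again from cavity stability), we obtain $\mathbb{E}(q_N-\Phi(q_N))^2\leq C/N^{1/2}$.

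To close the argument I would verify that $\Phi$ is a strict contraction on $[0,1]$ for every $t<1$: Gaussian integration by parts gives $\Phi'(r)=t\,\mathbb{E}[\sech^2(X)(1-3\tanh^2(X))]$ with $X=h+\sqrt{tr}Z$, and the elementary optimization $\sup_{y\in[0,1]}(1-y)|1-3y|=1$ (attained at $y=\tanh^2(X)=0$) gives $\|\Phi'\|_\infty\leq t$. Thus $\Phi$ has a unique fixed point $q\in[0,1]$, and the pointwise inequality $|q-q_N|\leq t|q-q_N|+|\Phi(q_N)-q_N|$ together with the previous paragraph gives $\mathbb{E}|q-q_N|^2\leq (1-t)^{-2}\mathbb{E}(q_N-\Phi(q_N))^2\leq C/N^{1/2}$. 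The main obstacle is the off-diagonal covariance estimate, which rests on the cavity stability $\mathbb{E}(m_k-m_k^{(i)})^2\leq C/N$; this is the genuinely dynamical input, which I would expect to fall out of the same It\^o machinery developed in the paper to establish Theorem~\ref{thm:main} and the corresponding decay bounds on the two-point functions $m_{ij}$.
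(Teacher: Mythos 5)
Your proposal is correct, and it reorganizes the argument in a way that differs genuinely from the paper's proof, while resting on the same three pillars: the hierarchical TAP equation \eqref{eq:hTAP1}, the conditional Gaussianity of the cavity field $\xi_i$ given $\mathcal{G}_i$ (so that $\mathbb{E}[\tanh^2(h+\xi_i)\mid\mathcal{G}_i]=\Phi(q_N^{(i)})$), and the Lipschitz bound $\|\Phi'\|_\infty\leq t<1$ (your computation of $\Phi'$ agrees with the paper's, since $\sech^2(x)(1-3\tanh^2(x))=(1-2\sinh^2(x))/\cosh^4(x)$). The paper instead computes $\mathbb{E}\,q_N$ and $\mathbb{E}\,q_N^2$ separately: by symmetry these reduce to $\mathbb{E}\,f(q_N^{(1)})$ and, after passing to the doubly reduced system, to $\mathbb{E}\,f^2(q_N^{(1,2)})$ via the joint (conditional and hence unconditional) Gaussianity and independence of the two cavity fields $Z_{12},Z_{22}$; the contraction is then run on $\mathbb{E}|q_N-\mathbb{E}\,q_N|^2$ and, in a second step, on $|q-\mathbb{E}\,q_N|$, with intermediate bounds of order $N^{-1/4}$. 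Your version replaces the second-moment computation by the orthogonality $\mathbb{E}[\eta_i\eta_j^{(i)}]=0$ and the covariance bound $|\mathbb{E}\eta_i\eta_j|=|\mathbb{E}\eta_i(\eta_j-\eta_j^{(i)})|\leq CN^{-1/2}$ -- which is the same doubly-reduced-system mechanism in martingale-difference clothing -- and then closes with the pointwise inequality $|q-q_N|\leq(1-t)^{-1}|\Phi(q_N)-q_N|$ rather than going through $\mathbb{E}\,q_N$. This is arguably cleaner: it delivers $\mathbb{E}|q-q_N|^2\leq C N^{-1/2}$ in one stroke without the intermediate $N^{-1/4}$ estimates. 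The cavity stability inputs you invoke, $\mathbb{E}(m_k-m_k^{(i)})^2\leq C/N$ and its two-index analogues, are exactly what the paper establishes via It\^o's lemma in the proofs of Lemmas \ref{lm:hTAP1} and \ref{lm:hTAP2} (cf.\ \eqref{eq:TAP2bnd0}), so deferring to that machinery is legitimate; note also that in the sum $\xi_j-\xi_j^{(i)}=g_{ji}m_i^{(j)}+\sum_{k\neq i,j}g_{jk}(m_k^{(j)}-m_k^{(i,j)})$ the couplings $g_{jk}$ are independent of the magnetization differences, which is what makes the $L^2$ bound on this term an immediate consequence of cavity stability.
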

If we use the information of Proposition \ref{prop:concentration} and assume in addition that mixed moments of distinct correlation functions are of lower order $o(N^{-1})$, we recover the AT transition line \eqref{eq:ATline} as a singularity in the norm of the two point functions. More precisely, applying \eqref{eq:hTAP2}, we obtain from Gaussian integration by parts and separating the diagonal term in the sum $ \sum_{l\neq i} \big(m_{lj}^{(i)}\big)^2 $ that
		\begin{equation}\label{eq:heuristic}\begin{split}
		\mathbb{E}\, m_{ij}^2 \approx &\, \mathbb{E} \,t \bigg(1- \tanh^2\Big(h + \sum_{k\neq i} g_{ik}m_k^{(i)}\Big)\bigg)^2 \frac1N \sum_{l\neq i} \big(m_{lj}^{(i)}\big)^2 \\
		&\, + \mathbb{E}\frac{t^2}{N^2}\sum_{l_1, l_2\neq i}  \bigg[\partial_{il_1}\partial_{il_2} \bigg(1- \tanh^2\Big(h + \sum_{k\neq i} g_{ik}m_k^{(i)}\Big)\bigg)^2\bigg] m_{l_1j}^{(i)}m_{l_2j}^{(i)} \\
		\approx &\;  \mathbb{E} \,t\big[1- \tanh^2(h +  \sqrt{tq} Z\big)\big]^2\, \bigg[\frac1N \mathbb{E} \Big(1- \big(m_j^{(i)}\big)^2\Big)^2+ \mathbb{E}\frac1N\sum_{l\neq i,j} \big(m_{lj}^{(i)}\big)^2\bigg] + o(N^{-1})\\
		\approx &\; \frac{t}N\bigg[\mathbb{E} \frac{1}{\cosh^4(h +  \sqrt{tq} Z\big)}\bigg]^2 + \mathbb{E} \frac{t}{\cosh^4(h +  \sqrt{tq} Z\big)}\,\mathbb{E} \, m_{ij} ^2.
		\end{split}\end{equation}
Here, we used the approximation $ \mathbb{E}\,\big(m_{lj}^{(i)}\big)^2 \approx \mathbb{E} \, m_{lj} ^2$, which will be justified later. Moreover, we used that
\[Z_i = \big(t q_N^{(i)}\big)^{-1/2} \sum_{k\neq i} g_{ik}m_k^{(i)}\]
is independent of the remaining disorder $ g_{kl} $, for $k,l \neq i$, because of the Gaussian structure (see also \cite[Lemma 1.7.6]{Tal1}). Altogether, we expect that
		\[ \lim_{N\to \infty }\mathbb{E}\, \big( \sqrt Nm_{ij}\big)^2  = t \bigg[1-\mathbb{E} \frac{t}{\cosh^4(h +  \sqrt{tq} Z\big)}\bigg]^{-1}\bigg[\mathbb{E} \frac{1}{\cosh^4(h +  \sqrt{tq} Z\big)}\bigg]^2,  \]
where the right hand side is finite if \eqref{eq:ATline} holds true. Based on \eqref{eq:heuristic} as well as the results of Theorem \ref{thm:main} and Proposition \ref{prop:concentration}, we will prove the following proposition.
\begin{prop}\label{prop:mij2norm} Let $ 0\leq t<\log 2$ and let $ q = \mathbb{E}\tanh^2(h +\sqrt {tq}Z)$, where $Z\sim \mathcal{N}(0,1)$ denotes a standard Gaussian random variable. Then, for every $\epsilon>0$ sufficiently small, there exists a constant $C= C_{\eps,t}>$ so that
		\begin{equation}\label{eq:2normmij} \mathbb{E} \, m_{ij}^2 = \frac{t}N \bigg[1-\mathbb{E} \frac{t}{\cosh^4(h +  \sqrt{tq} Z\big)}\bigg]^{-1}\bigg[\mathbb{E} \frac{1}{\cosh^4(h +  \sqrt{tq} Z\big)}\bigg]^2 + \Theta \end{equation}
for an error $\Theta$ bounded by $| \Theta| \leq C/N^{1+\epsilon}$.
\end{prop}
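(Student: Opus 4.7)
The plan is to make the heuristic computation in \eqref{eq:heuristic} rigorous, using \eqref{eq:hTAP2} to reduce to an expression that can be attacked by Gaussian integration by parts and then closed using Proposition \ref{prop:concentration}. Set $A_i = 1 - \tanh^2\bigl(h + \sum_{k\neq i} g_{ik} m_k^{(i)}\bigr)$ and $B_{ij} = \sum_{l\neq i} g_{il} m_{lj}^{(i)}$, so that \eqref{eq:hTAP2} reads $m_{ij} = A_i B_{ij} + E_{ij}$ with $\mathbb{E} E_{ij}^2 \leq C N^{-1-\epsilon}$. By Cauchy--Schwarz together with the a priori bound $\mathbb{E} m_{ij}^2 = \mathcal{O}(1/N)$ (which is available in the regime $t < \log 2$ from the two-point function estimates behind Theorem \ref{thm:main}), the problem reduces to evaluating $\mathbb{E} (A_i B_{ij})^2$ up to an error of order $N^{-1-\epsilon/2}$.

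To that end, I would expand
\[\mathbb{E}\, A_i^2 B_{ij}^2 = \sum_{l_1,l_2 \neq i} \mathbb{E}\, A_i^2\, g_{il_1} g_{il_2} m_{l_1 j}^{(i)} m_{l_2 j}^{(i)}\]
and integrate by parts twice in the $g_{i\cdot}$ variables. The key observation, already implicit in \eqref{eq:heuristic}, is that $m_k^{(i)}$ and $m_{lj}^{(i)}$ are deterministic functions of the disorder off the $i$-th row, so all derivatives land on $A_i^2$. The Wick contraction $l_1 = l_2$ produces the leading diagonal term $(t/N) \sum_{l\neq i} \mathbb{E}\, A_i^2 (m_{lj}^{(i)})^2$, while the remaining double-derivative contribution is proportional to $(t/N)^2 \bigl(\sum_l m_l^{(i)} m_{lj}^{(i)}\bigr)^2$ times a bounded smooth function of the cavity field, and is therefore $\mathcal{O}(N^{-2})$ by Cauchy--Schwarz and the decay of two-point functions.

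Next, I would split the diagonal sum at $l=j$. For $l=j$, use $m_{jj}^{(i)} = 1 - (m_j^{(i)})^2$; the independence of the $i$-th row of couplings from the remaining disorder (\cite[Lemma 1.7.6]{Tal1}) allows the factorisation $\mathbb{E}\, A_i^2 (1 - (m_j^{(i)})^2)^2 = \mathbb{E}\bigl[\mathbb{E}[A_i^2 \mid \cG^{(i)}] (1-(m_j^{(i)})^2)^2\bigr]$, where the inner conditional expectation depends on $\cG^{(i)}$ only through the overlap $q_N^{(i)}$. Applying Proposition \ref{prop:concentration} in the $(N-1)$- and $(N-2)$-spin systems to replace $q_N^{(i)}, q_N^{(i,j)}$ by $q$, and combining with \eqref{eq:hTAP1} applied to $m_j^{(i)}$ to rewrite $1-(m_j^{(i)})^2$ as $\sech^2(h + \sqrt{tq} Z')$, produces the leading contribution $(t/N)\bigl[\mathbb{E}\, 1/\cosh^4(h+\sqrt{tq}Z)\bigr]^2$. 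For $l \neq i, j$ the analogous decoupling together with the identity $\mathbb{E}(m_{lj}^{(i)})^2 = \mathbb{E}\, m_{ij}^2 + o(N^{-1})$ (comparing the full and spin-$i$-removed systems via a standard cavity stability argument) yields the contribution $\mathbb{E}[t/\cosh^4(h+\sqrt{tq}Z)] \cdot \mathbb{E}\, m_{ij}^2$.

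Collecting these identifications,
\[\mathbb{E}\, m_{ij}^2 = \frac{t}{N}\bigl[\mathbb{E}\, 1/\cosh^4(h+\sqrt{tq}Z)\bigr]^2 + \mathbb{E}\bigl[t/\cosh^4(h+\sqrt{tq}Z)\bigr] \cdot \mathbb{E}\, m_{ij}^2 + \Theta,\]
with $|\Theta| \leq C N^{-1-\epsilon}$. Since $t < \log 2$ forces $\mathbb{E}[t/\cosh^4(\cdot)] < 1$, one can solve algebraically for $\mathbb{E}\, m_{ij}^2$ to recover \eqref{eq:2normmij}. The hard part will be executing the two decoupling steps at the sharp level $o(N^{-1})$ rather than $o(1)$: factoring $\mathbb{E}\, A_i^2 (m_{lj}^{(i)})^2 \approx \mathbb{E}\, A_i^2 \cdot \mathbb{E}(m_{lj}^{(i)})^2$ demands exploiting both the conditional independence and the smoothness of $A_i^2$ as a function of $q_N^{(i)}$ (so that the $L^2$ concentration rate $N^{-1/4}$ from Proposition \ref{prop:concentration} produces a sufficiently small multiplicative error after multiplication by the $\mathcal{O}(1/N)$ quantity $\mathbb{E}(m_{lj}^{(i)})^2$), and replacing $\mathbb{E}(m_{lj}^{(i)})^2$ by $\mathbb{E}\, m_{lj}^2$ needs a quantitative stability estimate for the removal of a single spin. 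Both should be within reach of the dynamical decay-of-correlation tools developed for Theorem \ref{thm:main}.
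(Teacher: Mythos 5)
Your overall architecture matches the paper's: reduce to $\mathbb{E}(A_iB_{ij})^2$ via \eqref{eq:hTAP2}, integrate by parts twice in the $i$-th row of couplings, split the diagonal at $l=j$, decouple the cavity field from $q_N^{(i)}$ using Proposition \ref{prop:concentration} and Lipschitz continuity, replace $\mathbb{E}\,(m_{lj}^{(i)})^2$ by $\mathbb{E}\, m_{lj}^2$ by a cavity stability estimate, and solve the resulting linear equation. The decoupling steps you flag as ``the hard part'' are in fact handled routinely in the paper (H\"older with exponents $2+\delta$ and $(2+\delta)/\delta$ combined with the $L^p$ version \eqref{eq:concentrationLp} of the concentration bound).

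The genuine gap is your treatment of the off-diagonal double-derivative term. You assert that
\[
\frac{t^2}{N^2}\,\mathbb{E}\Big(\sum_{l\neq i} m_l^{(i)} m_{lj}^{(i)}\Big)^2 \times (\text{bounded}) = \mathcal{O}(N^{-2})
\quad\text{``by Cauchy--Schwarz and the decay of two-point functions.''}
\]
This does not follow. Cauchy--Schwarz gives $\big(\sum_{l} m_l^{(i)} m_{lj}^{(i)}\big)^2 \leq N \sum_{l}\big(m_{lj}^{(i)}\big)^2$, and Lemma \ref{lm:L2bndmij} gives $\mathbb{E}\,\big(m_{lj}^{(i)}\big)^2\leq C/N$, so the term is only $\mathcal{O}(N^{-1})$ --- the \emph{same} order as the leading term you are trying to extract, which destroys the claimed $N^{-1-\epsilon}$ error. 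To get the improvement you need square-root cancellation in $\sum_l m_l^{(i)} m_{lj}^{(i)}$, i.e. that mixed moments $\mathbb{E}\, m_{l_1}m_{l_2}m_{l_1j}m_{l_2j}$ with $l_1\neq l_2$ are $o(N^{-1})$; this is exactly the extra assumption flagged in the heuristic \eqref{eq:heuristic} and it is the technical heart of the paper's proof. The paper establishes $\mathbb{E}\big(N^{-1}\sum_{l\neq i,j} m_l m_{lj}\big)^2 \leq C N^{-1-\epsilon}$ by a second contraction argument: apply Lemma \ref{lm:hTAP2} to $m_{lj}$ itself, integrate by parts in the $j$-th row, recognize the leading output as $t$ times the same quantity (at another site, equal by symmetry), and close the self-bounding inequality using $t<1$. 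Without this step (or a substitute for it) your proof does not reach the stated $N^{-1-\epsilon}$ error.
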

The leading order behavior \eqref{eq:2normmij} of the two point functions $m_{ij}$ is well-known and already mentioned in \cite[Section V.3]{MPV}. A rigorous proof of the identity \eqref{eq:2normmij} for $t<1/4$ can be found in \cite[Section 1.8]{Tal1} and higher moments of the $m_{ij}$ were analyzed in \cite{Han}. These proofs are, however, not based on the heuristics outlined in \eqref{eq:heuristic}. 

As a corollary of Theorem \ref{thm:main}, we are also able to derive the TAP equations.
\begin{cor}\label{cor:main}
Let $0\leq t<\log 2$. Then, there exists a constant $C=C_t>0$, independent of $N\in\mathbb{N}$, such that	
		\begin{equation}\label{eq:TAP1} \mathbb{E} \bigg[m_i - \tanh\Big( h + \sum_{j\neq i} g_{ij}m_j - t(1-q_N) m_i\Big)\bigg]^2\leq \frac{C}N, \end{equation}
where $q_N$ is defined by $q_N = N^{-1}\sum_{k=1}^N m_k^2$.

Moreover, for any $\epsilon>0$ sufficiently small, there exists $C=C_{t,\epsilon}>0$ such that
		\begin{equation}\label{eq:TAP2}
		\begin{split}
		  &\mathbb{E} \bigg[ m_{ij} -   \big(1-m_i^2\big) \bigg(\sum_{k\neq i} g_{ik}m_{kj}+\frac {2t}N (M m)_j m_i-t(1-q_N)m_{ij} \bigg)  \bigg]^2\leq \frac{C}{N^{1+\epsilon}}
		\end{split}\end{equation}
for all $i\neq j$. Here, we set $ M = (m_{kl})_{1\leq k,l\leq N}$ and $m = (m_1,\dots, m_N)$.
\end{cor}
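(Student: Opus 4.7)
The corollary follows from Theorem \ref{thm:main} by replacing the cavity magnetizations $m_j^{(i)}$ and cavity two-point functions $m_{lj}^{(i)}$ with their full Gibbs counterparts $m_j, m_{lj}$, at the price of the Onsager reaction terms appearing in \eqref{eq:TAP1} and \eqref{eq:TAP2}. Since $\tanh$ is $1$-Lipschitz, \eqref{eq:TAP1} follows from \eqref{eq:hTAP1} once one establishes
\[ \mathbb{E}\Big[\sum_{j\neq i}g_{ij}\big(m_j^{(i)} - m_j\big) + t(1-q_N)m_i\Big]^2 \leq \frac{C}{N}. \]
For \eqref{eq:TAP2}, using \eqref{eq:hTAP1} and the Lipschitz property of $x \mapsto 1-\tanh^2(x)$, together with an $L^4$ bound on $\sum_l g_{il}m_{lj}^{(i)}$ to substitute $(1-m_i^2)$ for the prefactor in \eqref{eq:hTAP2}, the task reduces to proving an $L^2$-bound of order $N^{-1-\epsilon}$ on
\[\sum_{l\neq i}g_{il}\big(m_{lj}^{(i)} - m_{lj}\big) + t(1-q_N)m_{ij} - \frac{2t}{N}(Mm)_j\,m_i.\]

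The plan is to evaluate the means of these target expressions via Gaussian integration by parts $\mathbb{E}[g_{il}F] = (t/N)\mathbb{E}[\partial_{g_{il}}F]$, exploiting that $m_j^{(i)}$ and $m_{lj}^{(i)}$ are independent of $g_{il}$. Direct differentiation yields
\[\partial_{g_{il}}m_l = m_i(1-m_l^2) - m_l m_{il},\qquad \partial_{g_{il}}m_{lj} = (1-m_l^2)m_{ij} - m_{il}m_{lj} - 2m_im_lm_{lj} - m_l\,m_{i,l,j},\]
where $m_{i,l,j}$ denotes the connected three-point function. Summing over $l\neq i$, the diagonal identity $\sum_{l\neq i}(1-m_l^2)\approx N(1-q_N)$ together with $\sum_{l\neq i}m_lm_{lj} \approx (Mm)_j$ produces exactly the claimed Onsager terms $t(1-q_N)m_i$, $t(1-q_N)m_{ij}$ and $\frac{2t}{N}(Mm)_jm_i$. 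The remaining mean contributions, of order $(t/N)\mathbb{E}[(M^2)_{ij}]$ and $(t/N)\mathbb{E}[\sum_l m_l m_{i,l,j}]$, are then controlled by the $L^1$-decay of two- and three-point connected correlators.

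For the variance of the target, the Gaussian Poincar\'e inequality reduces matters to bounding $(t/N)\sum_l\mathbb{E}[(\partial_{g_{il}}(\text{target}))^2]$, whose terms again involve the same correlators; off-diagonal fluctuation sums of the form $\sum_l g_{il}f_l$ with $g_{il}$-independent $f_l$ contribute $L^2$-norm $O(N^{-1/2})$ via $\mathbb{E}\,m_{jk}^2 = O(1/N)$ from Proposition \ref{prop:mij2norm}. The main technical obstacle is establishing uniform $L^2$-decay of the connected three-point function $m_{i,l,j}$ at rate $o(1/N)$; this rests on the three-point decay estimates proved by extending the dynamical analysis underlying Theorem \ref{thm:main}, and accounts for the $N^{-\epsilon}$-loss in the exponent of \eqref{eq:TAP2}.
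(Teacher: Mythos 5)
Your reduction and your identification of the Onsager terms are correct and agree with the paper: writing $W_k=m_k-m_k^{(i)}$ (resp.\ $Y_k=m_{kj}-m_{kj}^{(i)}$), the derivatives $\partial_{ik}m_k=m_i(1-m_k^2)-m_km_{ik}$ and $\partial_{ik}m_{kj}=(1-m_k^2)m_{ij}-2m_im_km_{kj}-m_{ik}m_{jk}-m_km_{ijk}$ are exactly what produces $t(1-q_N)m_i$, $t(1-q_N)m_{ij}$ and $\frac{2t}{N}(Mm)_jm_i$ once $g_{ik}^2$ is replaced by its mean $t/N$. The gap is in the control of the \emph{second moment} of the residual. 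You compute only the unconditional mean of the target by Gaussian integration by parts and then invoke the Poincar\'e inequality for the variance, summing $(t/N)\,\mathbb{E}[(\partial_{g_{il}}T)^2]$ over the $i$-th row only. But $T=\sum_{l\neq i} g_{il}(m_l^{(i)}-m_l)+t(1-q_N)m_i$ depends on the entire disorder matrix, so the Poincar\'e sum must also run over all $g_{kl}$ with $k,l\neq i$; those derivatives involve $\partial_{kl}\big(m_{l'}-m_{l'}^{(i)}\big)$, a cavity comparison at the level of two- and three-point functions that you have not estimated. If instead you apply Poincar\'e conditionally on the off-row disorder, you must in addition bound the conditional expectation $\mathbb{E}\big[T\mid (g_{kl})_{k,l\neq i}\big]$ in $L^2$, not merely $\mathbb{E}\,T$; this is where the real work lies and it is absent from the proposal.

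The paper closes precisely this step by a different device: it Taylor-expands $W_k(g_{ik})$ to second order around $g_{ik}=0$, isolates the linear term (which yields the Onsager correction), and bounds $\mathbb{E}\big(\sum_k g_{ik}W_k(g_{ik}=0)\big)^2$ by integrating by parts twice within the row-$i$ variables, using that $W_k(g_{ik}=0)$ is independent of $g_{ik}$ together with the key estimate $\mathbb{E}\,W_k^2(g_{ik}=0)\leq C/N$, which comes from the It\^o representation \eqref{eq:TAP2bnd0} of $\langle m_k^{[i]}\rangle-m_k^{(i)}$ and Lemma \ref{lm:L2bndmij}; the analogue for $Y_k$ additionally needs Lemma \ref{lm:L2bndmijk}. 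This It\^o-based bound on the cavity difference at $g_{ik}=0$ is the missing ingredient in your argument: without it neither the diagonal Poincar\'e contribution $\frac{t}{N}\sum_k\mathbb{E}(m_k^{(i)}-m_k)^2$ nor the conditional mean can be controlled. A smaller point: for $\mathbb{E}\,m_{jk}^2=O(1/N)$ you should cite Lemma \ref{lm:L2bndmij} rather than Proposition \ref{prop:mij2norm}, whose proof comes later in the paper and itself relies on these correlation bounds.
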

We point out that, using Proposition \ref{prop:concentration}, we can replace $q_N$ in \eqref{eq:TAP1} by the solution $q=\mathbb{E}\tanh^2(h+\sqrt{tq}Z)$, up to another error that vanishes as $N\to\infty$. This yields \eqref{eq:TAP1} in the form that is typical in the mathematical literature on the subject.

\begin{remark}
Let us mention that \eqref{eq:TAP2} represents a resolvent equation for the matrix $M=(m_{kl})_{1\leq k,l\leq N}$. Indeed, neglecting the error terms, \eqref{eq:TAP2} means that
\begin{equation} \label{eq:resheuristic} M \approx \frac{1}{ \Lambda - tA -G  - E_0 },
\end{equation}
where $ \Lambda_{ij} = (1-m_i^2)^{-1}\delta_{ij}, \hspace{0.5cm}A_{ij}= 2N^{-1} m_im_j ,  \hspace{0.5cm}E_0 = -t(1-q_N)\approx -t(1-q)$ and $G$ consists of the couplings $\{g_{ij}\}$ extended to a symmetric matrix. Thus, one recovers the resolvent of a deformed Gaussian Orthogonal Ensemble at the energy $E_0$. Like the heuristics following Theorem \ref{thm:main}, this suggests to study the high temperature regime in view of the singularity of $M$, a viewpoint reminiscent of \cite{Ple} (see also \cite[Eq. (3.3)]{Ple}). 

Based on the observation in \eqref{eq:resheuristic}, the AT condition can also be expressed in terms of a spectral condition. To see this, let us neglect the rank-one perturbation $A$ and the correlations between $\Lambda$ and $G$, which should be weak at high temperature. Setting $G = \sqrt{t} \widetilde{G}$ for a GOE matrix $\widetilde{G}$, we are evaluating
\[M(E) = \left(\Lambda - \sqrt{t}\widetilde{G} - E \right)^{-1}\]
at a special energy $E_0 = -t(1-q)$. From random matrix theory we expect
\[M_{ii}(E) = \frac{1}{\Lambda_{ii} - E - t S(E)}\]
with
\[S(E) = \frac{1}{N} \sum_i \frac{1}{\Lambda_{ii} - E - t S(E)}.\]
Here, $E$ can be real as long as it is outside of the spectrum. Now notice that
\begin{align*} S^\prime (E) &= (1 + t S^\prime(E)) \frac{1}{N} \sum_i \frac{1}{(\Lambda_{ii} - E - t S(E))^2}\\
&= (1 + t S^\prime(E)) \frac{1}{N} \sum_i (M_{ii}(E))^2.
\end{align*}
If we plug in  $E_0 = -t(1-q)$, this calculation says that
\[S^\prime(E_0) = (1 + t S^\prime(E_0)) \frac{1}{N}  \sum_{i=1}^N (1-m_i^2)^2 \approx  (1 + t S^\prime(E_0)) \ee \sech^{4}(h +   \sqrt{tq} Z),  \]
so that
\[S^\prime(E_0) = \frac{  \ee \sech^{4}(h +   \sqrt{tq} z)}{1 -  t \ee \sech^{4}(h +  \sqrt{tq} z)}.  \]
In particular, $S^\prime(E_0)$ is finite precisely under the AT condition. Since $S$ is supposed to be analytic everywhere except the spectral edge, this fits in nicely with $E_0$ being outside the spectrum under the AT condition.
\end{remark}

Let us conclude this introduction with some comments about how to extend our results to mixed $p$-spin models. To this end, let $ H_N^{(p)}:\{-1,1\}^N\to \mathbb{R}$ be defined by
		\[ H_N^{(p)}(\sigma) = h + \beta \sum_{p=2}^\infty \frac{\beta_p \sqrt{p!}}{N^{(p-1)/2}}\sum_{ |A|=p}  g_{A} \prod_{i\in A} \sigma_i\]
for i.i.d. standard Gaussian random variables $ (g_A)_{A\subset \{1,\dots,N\}}$ and a sequence $(\beta_p)_{p\geq 2}$ ensuring that $ \xi (s) := \beta^2\sum_{p=2}^\infty \beta_p^2s^p<\infty$ for all $s\in[0;1]$. The function $\xi$ characterizes the model in the sense that
		\[\mathbb{E}\, (H_N^{(p)}(\sigma^1)-h)(H_N^{(p)}(\sigma^2)-h) =  \xi\big( R_{1,2}(\sigma^1,\sigma^2)\big).  \]
Analogously to Theorem \ref{thm:main}, one can prove that for $ \beta\geq0 $ sufficiently small and $ \beta_p =\beta_0^p$ for some $\beta_0\geq 0$ sufficiently small, there exists a constant $C = C_{\beta, \beta_0}>0$ such that
		\begin{equation}\label{eq:pspinbnds}
		\begin{split}
		&\mathbb{E} \bigg[m_i - \tanh\bigg( h +  \beta\sum_{p=2}^\infty \frac{\beta_p \sqrt{p!}}{N^{(p-1)/2}}\sum_{i\in A,  |A|=p}  g_{A} \Big\langle\prod_{k\in A, k\neq i}  \sigma_k  \Big\rangle^{(i)}  \bigg)\bigg]^2\leq \frac{C}N,  \\
		&\mathbb{E} \bigg[ m_{ij} -  \sech^2\bigg( h +  \beta\sum_{p=2}^\infty \frac{\beta_p \sqrt{p!}}{N^{(p-1)/2}}\sum_{i\in A,  |A|=p}  g_{A} \Big\langle\prod_{k\in A, k\neq i}  \sigma_k  \Big\rangle^{(i)}  \bigg)  \\
		&\hspace{3.5cm} \times  \beta\sum_{p=2}^\infty \frac{\beta_p \sqrt{p!}}{N^{(p-1)/2}}\sum_{i\in A,  |A|=p}  g_{A}    \Big\langle \sigma_j; \prod_{k\in A, k\neq i}  \sigma_k   \Big\rangle^{(i)} \bigg]^2\leq \frac{C}{N^{1+\epsilon}} 
		\end{split}
		\end{equation}
for all $\epsilon>0$ sufficiently small and $i\neq j$. Here, we denote
		\[  \Big\langle \sigma_j; \prod_{k\in A, k\neq i}  \sigma_k   \Big\rangle^{(i)} = \Big\langle \sigma_j \prod_{k\in A, k\neq i}  \sigma_k   \Big\rangle^{(i)}- \langle\sigma_j\rangle^{(i)}\Big\langle  \prod_{k\in A, k\neq i}  \sigma_k   \Big\rangle^{(i)} \]
and all Gibbs expectations are taken with respect to the Gibbs measure induced by $H_N^{(p)}$. Since the methods to prove Theorem \ref{thm:main} can be adapted in a straight-forward way to prove the bounds in \eqref{eq:pspinbnds}, we focus in this paper exclusively on the analysis of the $2$-spin model with Hamiltonian $H_N$ defined in \eqref{eq:HN}.

Finally, let us remark that also the heuristics in \eqref{eq:heuristic} can be generalized to the $p$-spin models. Indeed, let us assume appropriate decay of correlations so that we can factorize
		\[  \Big\langle\prod_{k\in A, k\neq i}  \sigma_k  \Big\rangle^{(i)} \approx \prod_{k\in A, k\neq i} m_k^{(i)}. \] 
Writing $ A = \{ j_1,j_2,\dots, j_p\}$, this can be made rigorous by using the identity
		\[\begin{split}
		& \Big\langle\prod_{k\in A, k\neq i}  \sigma_k  \Big\rangle^{(i)}  - \prod_{k\in A, k\neq i} m_k^{(i)} \\
		& \hspace{0.5cm}=  \Big\langle \sigma_{j_1}; \prod_{\substack{k\in A,\\ k\neq i,j_1}}  \sigma_k  \Big\rangle^{(i)} + \Big\langle \sigma_{j_2}; \prod_{\substack{ k\in A,\\ k\neq i,j_1, j_2}}  \sigma_k  \Big\rangle^{(i)}m_{j_1}^{(i)}  +\ldots+  m_{j_{p-1}j_p}^{(i)}\prod_{\substack{k\in A, \\k\neq i, j_{p-1},j_p}} m_k^{(i)}   
		\end{split}  \]		
and adapting the methods presented below to show that the correlation functions on the right hand side are small in the limit $N\to\infty$. By \eqref{eq:pspinbnds} and in analogy to Prop. \ref{prop:concentration}, we then expect that $ q_N = N^{-1}\sum_{k=1}^Nm_k^2 \approx \mathbb{E}\,q_N$ concentrates and converges as $N\to\infty$ to a solution $q\in[0;1]$ of the self-consistent equation
		\[q  = \mathbb{E} \tanh^2( h + \sqrt{\xi'(q)}Z).\]
Here, $Z\sim \mathcal{N}(0,1)$ denotes a standard Gaussian. Assuming similarly that
		\[    \Big\langle \sigma_j; \prod_{k\in A, k\neq i}  \sigma_k   \Big\rangle^{(i)} \approx \sum_{k \in A, k\neq i}  m_{jk} \prod_{l \in A , l \neq i,k } m_l,  \] 
we may follow the heuristics of \eqref{eq:heuristic} and expect that
		\[\begin{split}
		 \mathbb{E}\,  m_{ij}^2  &\approx \frac1N \frac{\mathbb{E} \sech^4(h +  \sqrt{\xi'(q)} Z\big)\, \mathbb{E}\, \xi''(q)\sech^4(h +  \sqrt{\xi'(q)} Z\big)}{1-\mathbb{E}\, \xi''(q)\sech^4(h +  \sqrt{\xi'(q)} Z\big) }.  
		 \end{split}\] 
In particular, this can only hold true under the condition
		\[\mathbb{E}\, \xi''(q)\sech^4(h +  \sqrt{\xi'(q)} Z\big) <1, \]
which appears to be consistent with the generalized AT condition that is conjectured in \cite[Eq. (1.8) \& Eq. (1.9)]{JagTob} (assuming that, at sufficiently high temperature, the self-consistent equation $q = \mathbb{E} \tanh^2( h + \sqrt{\xi'(q)}Z)$ has a unique fixed point).

The paper is structured as follows. In the following Section \ref{sec:notation}, we introduce our notation. In Section \ref{sec:corrdec}, we establish suitable decay bounds on the two and three point correlation functions. In Sections \ref{sec:hTAP} and \ref{sec:TAP}, we prove the TAP equations in the sense of Theorem \ref{thm:main} and Corollary \ref{cor:main}.
Finally, in Section \ref{sec:conc}, we prove Propositions \ref{prop:concentration} and \ref{prop:mij2norm}.

\medskip

\noindent\textbf{Acknowledgements.} The work of P.\,S. is supported by the DFG grant SO 1724/1-1. Part of this work was written when A.A. was under the sponsorship of a Harvard University GSAS MGSTTRF. The research of H.-T. Y. is partially supported by NSF grant DMS-1855509 and a Simons Investigator award.

\section{Notation}\label{sec:notation}

In the following, we will need to consider expectations of observables conditionally on a given number of spins. To this end, it is useful to set up the following notation. Let $ A=  \{j_1, j_2,\dots, j_k\}  \subset \{1,\dots, N\}$, let $B\subset \{1,\dots, N\}$ be disjoint from $A$ with $|B|=l$ and let $\mathbf{\tau} =  (\tau_{j_1}, \dots, \tau_{j_k})\in \{-1,1\}^k$ be a fixed $j$-particle configuration. Then, we define the reduced Hamiltonian $ H_N^{ [A,B]} \equiv H_{N,(\tau_{j_1}, \dots, \tau_{j_k})}^{[A,B]}:  \{-1,1\}^{N-k-l}\to\mathbb{R}$ by
		\[
		H_N^{[A,B]}(\mathbf{\sigma}) = H_{N }^{[A,B]}(\sigma_{i_1}, \dots, \sigma_{i_{N-k-l}}) = \sum_{\substack{ 1\leq i< j\leq N:\\ i,j \not \in A \cup B  }} g_{ij} \sigma_{i}\sigma_j + \sum_{\substack{ 1\leq i\leq N: \\i\not \in A \cup B }} \Big( h + \sum_{j\in A } g_{ij}\tau_j   \Big)  \sigma_i.
		\]
$ H_N^{[A,B]}(\mathbf{\sigma})$ plays the role of the energy of the system, conditionally on the spins $ \sigma_j $ for $j\in A$ such that $\sigma_j=\tau_j$ and after the particles $\sigma_j$ for $j\in B$ have been removed from the system. For disjoint subsets $ A, B\subset \{1,\dots, N\}$, we then denote by $ \langle \cdot \rangle^{[A,B]}$ the Gibbs measure induced by the reduced Hamiltonian $ H_N^{[A,B]}$. We abbreviate $ \langle \cdot \rangle^{[A]} \equiv \langle\cdot \rangle^{[A, \emptyset]}$, $ \langle\cdot \rangle^{ (B)} \equiv  \langle\cdot \rangle^{[\emptyset, B]}$ as well as
$ \langle\cdot \rangle\equiv  \langle\cdot \rangle^{[\emptyset, \emptyset]}$. In particular, $ \langle\cdot\rangle$ denotes the usual Gibbs measure induced by $H_N = H_N^{[\emptyset,\emptyset]}$. By slight abuse of notation, if $A=\{i\}$ is a set of only one element, we write for simplicity \[ \langle\cdot\rangle^{[i]}:= \langle\cdot\rangle^{[\{i\}]}, \hspace{1cm}\langle\cdot\rangle^{(i)}:= \langle\cdot\rangle^{(\{i\})}.\]

For an observable $f$, notice that $ \langle f \rangle^{[A]}$ is equal to the conditional expectation of $f$, given the spins $\sigma_j$ for $j\in A$. Observables of particular interest will be the magnetizations $m_i^{[A]}$, the two point functions $ m_{ij}^{[A]} $ and the three point functions $m_{ijk}^{[A]}$, defined by
		\[\begin{split}m_i^{[A]} &=  \langle \sigma_i \rangle^{[A]}, \hspace{0.5cm} m_{ij}^{[A]} = \langle \sigma_i \sigma_j \rangle^{[A]} - \langle\sigma_i\rangle^{[A]}\langle \sigma_j\rangle^{[A]}, \\
		m_{ijk}^{[A]}& = \big\langle \big(\sigma_i- \langle\sigma_i\rangle^{[A]}\big)  \big( \sigma_j - \langle\sigma_j\rangle^{[A]}\big)\big( \sigma_k - \langle\sigma_k\rangle^{[A]}\big)\big\rangle^{[A]}. \end{split}\]
 If $ A=\emptyset$, we simply write $ m_i$, $m_{ij}$ and $m_{ijk}$, respectively.
		
Given disjoint subsets $A, B\subset \{1,\dots, N\}$, an index $i\in A$ and an observable $f$, we introduce furthermore the notation
		\[\delta_i \langle f \rangle^{[A,B]} = \frac12  \sum_{\sigma_i=\pm1 } \sigma_i \langle f \rangle ^{[A,B]} (\sigma_i) , \hspace{0.5cm} \eps_i \langle f \rangle^{[A]} = \frac12  \sum_{\sigma_i=\pm1 } \langle f \rangle^{[A,B]} (\sigma_i).\]

Finally, we denote by $C$ generic constants that may vary from line to line and that are independent of all parameters, unless specified otherwise. If a constant depends on a parameter, say $\epsilon$, we denote this typically by a subscript, i.e. $C_\epsilon$.

\section{Bounds on Correlation Functions}\label{sec:corrdec}

In this section, we will bound the two and three point functions, based on the key identity
		\begin{equation}\label{eq:condid} \begin{split}
		m_{ij}^{[A]}&=   \Big[1-\big( m_i^{[A]}\big)^2\Big] \delta_i m_j^{[A\cup \{i\} ]}.
		\end{split} \end{equation}
Differentiating~\eqref{eq:condid} with respect to the external field in direction of $\sigma_k$, we also get
		\begin{equation}\label{eq:condid2} m_{ijk}^{[A]} =   \Big[1-\big( m_i^{[A]}\big)^2\Big] \delta_i m_{jk}^{[A\cup \{i\} ]} - 2m_i^{[A]} m_{ik}^{[A]}\delta_i m_{j}^{[A\cup \{i\} ]}.\end{equation}
Here, $ A\subset \{1,\dots, N\} $ and $i,j,k \not \in  A $. Equation \eqref{eq:condid} is a simple consequence of the fact that the spins take values in $\{-1,1\}$ and the identities
		\[\begin{split}
		\langle\sigma_j\rangle^{[A]} =  m_j^{[A\cup\{i\}] }(\sigma_i=1) \big\langle\mathbf{1}_{\sigma_i=1}\big\rangle^{[A]} + m_j^{[A\cup\{i\}] }(\sigma_i=-1) \big\langle\mathbf{1}_{\sigma_i=-1}\big\rangle^{[A]} , \\
		\langle \sigma_i \sigma_j \rangle ^{[A]} = m_j^{[A\cup\{i\}] }(\sigma_i=1) \big\langle\mathbf{1}_{\sigma_i=1}\big\rangle^{[A]} - m_j^{[A\cup\{i\}] }(\sigma_i=-1) \big\langle\mathbf{1}_{\sigma_i=-1}\big\rangle^{[A]}.
		\end{split}\]

Let us consider first the two point functions. A simple idea to control the two point functions is to expand the identity \eqref{eq:condid} dynamically in the randomness $(g_{ik})_{k  \not \in A}$. More precisely, we can view the $ (g_{ik})_{k \not \in A}$ in $ H_N^{[A\cup\{i\}]}$ as Brownian motions at time $t$ and speed $1/N$  to rewrite the difference $ \delta_i m_j^{[A\cup\{i\} ]}$ in \eqref{eq:condid} through It\^o's lemma as
		\begin{equation}\label{eq:dynid} \begin{split}
		 \delta_i m_j^{[A\cup\{i\} ]}&=   \sum_{\substack{ k \not \in A  }}\int_0^t  \eps_i\, m_{kj}^{[A\cup\{i\}  ]}   (s)\,dg_{ik}(s) -   \sum_{\substack{ k \not \in A } }\int_0^t  \delta_i \Big(m_k^{[A\cup\{i\}]} m_{kj}^{[A\cup\{i\}  ]}\Big)(s) \frac{ds}N \\
		\end{split}
		\end{equation}	
Here and throughout this paper, we abbreviate $ \langle f \rangle^{[A\cup\{i\}]}(s) = \langle f \rangle^{[A\cup\{i\}]} \big( (g_{il}(s))_{l\not \in A}\big)$ for any observable $f$.

\begin{lemma}\label{lm:L2bndmij} Let  $0\leq  t < \log 2$, let $A\subset \{1,\dots, N\}$ and choose $\epsilon>0$ sufficiently small. Then, for some $C_{t,\epsilon}>0$, independent of $N$ and $A\subset \{1,\dots,N\}$, we have that
		\[ \sup_{\sigma \in \{-1,1\}^{|A|}}\mathbb{E}\, \big|m_{ij}^{[A]}\big|^{2+\epsilon} \leq  \frac{C_{t,\epsilon} }{N^{1+\epsilon/2}}  \]
for all $i\neq j$ with $i,j\not \in A$.
\end{lemma}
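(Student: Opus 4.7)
My plan is to set up a Gronwall bootstrap for the quantity
\[F(t) := \sup_{A,\,\tau}\;\sup_{\substack{i\neq j\\ i,j\notin A}} \mathbb{E}\bigl|m_{ij}^{[A]}(t)\bigr|^{2+\epsilon},\]
where the supremum runs over admissible conditioning data and $t$ refers to the stochastic interpretation of the couplings introduced before the lemma. From the identity~\eqref{eq:condid} together with $(1-(m_i^{[A]})^2)^{2+\epsilon}\le 1$, it suffices to bound $\mathbb{E}|\delta_i m_j^{[A\cup\{i\}]}|^{2+\epsilon}$. Since $\delta_i m_j^{[A\cup\{i\}]}(0)=0$ (the cavity magnetization does not depend on $\sigma_i$ when the $i$-th coupling row vanishes), the It\^o representation~\eqref{eq:dynid} decomposes $\delta_i m_j^{[A\cup\{i\}]}(t)$ into a martingale $M_t$ and a drift $D_t$ whose integrands are built out of two-point functions $m_{kj}^{[A\cup\{i\}]}$ (in the drift, multiplied by magnetizations).

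For $M_t$ I would apply the Burkholder--Davis--Gundy inequality at exponent $2+\epsilon$ to reduce the task to $\mathbb{E}[M,M]_t^{1+\epsilon/2}$, where $[M,M]_t = N^{-1}\sum_{k\notin A}\int_0^t(\epsilon_i m_{kj}^{[A\cup\{i\}]}(s))^2\,ds$. The key step is to isolate the diagonal $k=j$: since $m_{jj}^{[A\cup\{i\}]} = 1-(m_j^{[A\cup\{i\}]})^2$ is bounded by $1$, its contribution to $[M,M]_t^{1+\epsilon/2}$ is at most $(t/N)^{1+\epsilon/2}$, precisely the desired rate. The off-diagonal is estimated by the power--mean inequality $(\tfrac{1}{N-1}\sum_{k\neq j}x_k)^{1+\epsilon/2}\le \tfrac{1}{N-1}\sum_{k\neq j}x_k^{1+\epsilon/2}$ together with Jensen on the time integral, yielding a bound of the form $C_{t,\epsilon}\,t^{\epsilon/2}\int_0^t F(s)\,ds$. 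The drift $D_t$ is treated analogously: the $k=j$ piece is pointwise of order $t/N$, hence of $(2+\epsilon)$-th moment $\mathcal{O}(N^{-2-\epsilon})$, while for $k\neq j$ a direct application of Jensen in the $s$-integral and in the normalized sum $\tfrac{1}{N}\sum_k$, combined with $|\delta_i(m_k^{[A\cup\{i\}]} m_{kj}^{[A\cup\{i\}]})|\le \sup_{\sigma_i=\pm1}|m_{kj}^{[A\cup\{i\},\sigma_i]}|$, contributes $C\,t^{1+\epsilon}\int_0^t F(s)\,ds$.

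Collecting these two estimates produces a closed integral inequality of the form
\[F(t)\le \frac{C_{t,\epsilon}}{N^{1+\epsilon/2}} + c(t,\epsilon)\int_0^t F(s)\,ds,\qquad c(t,\epsilon) = \mathcal{O}(t^{\epsilon/2}+t^{1+\epsilon}),\]
to which Gronwall's lemma delivers the claimed bound. The most delicate point, in my view, is ensuring the bootstrap closes uniformly in the conditioning data: in~\eqref{eq:dynid} only the $i$-th coupling row has been rewound to the intermediate time $s$, while the remaining couplings still carry their time-$t$ values, so $\mathbb{E}|m_{kj}^{[A\cup\{i\}]}(s)|^{2+\epsilon}$ is computed in an \emph{interpolated} disorder rather than the symmetric model underlying the definition of $F(s)$. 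Justifying that this interpolated quantity is controlled by $F(s)$ requires exchangeability of the rows of $(g_{lm})$ and monotonicity in the growing couplings, and tracking the BDG and Jensen constants tightly enough to keep $c(t,\epsilon)$ small for $\epsilon$ chosen sufficiently small throughout the regime $0\le t<\log 2$ is where I expect the technical bookkeeping to be most delicate.
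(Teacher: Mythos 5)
Your setup (the identity \eqref{eq:condid}, the It\^o representation \eqref{eq:dynid}, a $(2+\epsilon)$-moment estimate, Gronwall) matches the paper's, but the way you close the estimate has a genuine gap, and it sits exactly at the point you flag as delicate. In \eqref{eq:dynid} only the $i$-th coupling row is rewound to the intermediate time $s$; all other rows still carry their time-$t$ values. Hence the integrand $\mathbb{E}\big|\eps_i m_{kj}^{[A\cup\{i\}]}\big|^{2+\epsilon}(s)$ is a two-point function in an interpolated disorder that is dominated by $F(t)$ (if $F$ is defined as a supremum over partial-time configurations with all times in $[0,t]$), \emph{not} by $F(s)$. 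There is no monotonicity in the couplings available to repair this (this is a spin glass, not a ferromagnet; no Griffiths-type inequality applies), so the closed inequality $F(t)\le CN^{-1-\epsilon/2}+c(t,\epsilon)\int_0^tF(s)\,ds$ cannot be established. What one actually obtains is an absorption inequality $F(t)\le CN^{-1-\epsilon/2}+c\,t\,F(t)$ with $c\ge 1+O(\epsilon)$ (the quadratic variation is $N^{-1}\sum_k$ over $N-1$ indices of terms each of size $F$, so there is no small prefactor in front of the cross-term), which closes only for $t$ below a threshold that is not obviously $\log 2$. A second warning sign: your claimed $c(t,\epsilon)=\mathcal{O}(t^{\epsilon/2}+t^{1+\epsilon})$ stays bounded for all $t$, so Gronwall would yield the lemma for \emph{every} $t\ge 0$ with an $N$-independent constant; that contradicts Proposition \ref{prop:mij2norm} and the expected breakdown of the $N^{-1/2}$ decay at the AT line, so the claimed closed inequality proves too much and must be false.

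The paper resolves this differently. Gronwall is applied only to the genuine self-term $(1+3\epsilon)\int_0^t\mathbb{E}\big|\delta_im_j^{[A\cup\{i\}]}\big|^{2+\epsilon}(s)\,ds$, which really is the same process along its own time variable; the cross-term is rewritten via \eqref{eq:condid} as $\mathbb{E}\big|\delta_km_j^{[A\cup\{i,k\}]}\big|^{2+\epsilon}$ with the conditioning set enlarged by one spin, and this bound is then \emph{iterated} up to $N-|A|$ times, running It\^o in a fresh coupling row at each step. Each iteration contributes a factor $(1+\epsilon)\big(e^{(1+3\epsilon)t}-1\big)/(1+3\epsilon)$, and the hypothesis $t<\log 2$ is precisely the condition for the resulting geometric series to converge and for the remainder (controlled by the trivial bound $|\delta m|\le 2$) to vanish. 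If you insist on a one-shot Gronwall formulation, you would have to build the growing conditioning set and the per-row time variables into the definition of $F$, at which point you are reproducing the paper's iteration rather than avoiding it.
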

\begin{proof} Let $ A\subset \{1,\dots, N\}$ be arbitrary. By \eqref{eq:condid}, we have that
		\[ \mathbb{E} \big|m_{ij}^{[A]}\big|^{2+\epsilon} \leq \mathbb{E} \big|\delta_i m_j^{[A\cup\{i\}]}\big|^{2+\epsilon}, \]
so let us bound the right hand side. It\^o's Lemma and~\eqref{eq:dynid} imply
		\[\begin{split}
\mathbb{E} \big| \delta_i m_j^{[A\cup\{i\}]}\big|^{2+\epsilon}(t)   &\le   (1+\epsilon/2)(1+\epsilon)  \sum_{\substack{ k \not \in A }}\int_0^t \mathbb{E}  \Big|\delta_i m_j^{[A\cup\{i\}]}\Big|^{\epsilon}(s) \Big| \eps_i m_{kj}^{[A\cup\{i\}  ]} \Big|^2 (s)\, \frac{ds}N\\
		&+ (2+\epsilon)   \sum_{\substack{ k \not \in A } }\int_0^t   \mathbb{E} \, \Big|\delta_i m_j^{[A\cup\{i\}]} \Big|^{1+\epsilon} \Big|(s) \delta_i \Big(m_k^{[A\cup\{i\}]} m_{kj}^{[A\cup\{i\}  ]}\Big) \Big|(s) \frac{ds}N \\
		&\leq \frac{C_{t,\epsilon}}{N^{1+\epsilon/2}} +  (1+3 \epsilon)\int_0^t \mathbb{E}\big|\delta_i m_j^{[A\cup\{i\}]}\big|^{2+\epsilon}(s)\, ds \\
		&+ \sup_{\substack{ k\not \in A\cup\{j\}, \\\sigma_i=\pm1 }} \int_0^t (1+\epsilon)\,\mathbb{E}\Big(1+\big|m_k^{[A\cup\{i\}]}\big|^{2+\epsilon}\Big) \big|m_{kj}^{[A\cup\{i\}]}(\sigma_i)\big|^{2+\epsilon} (s)\,ds.
		\end{split} \]
Here we used Young's inequality, the smallness of $\epsilon$, and the trivial bounds for the case $k=j$. Inserting \eqref{eq:condid}, we obtain
		\[\begin{split}
		\Big(1+\big|m_k^{[A\cup\{i\}]}\big|^{2+\epsilon}\Big) \big|m_{kj}^{[A\cup\{i\}]}(\sigma_i)\big|^{2+\epsilon}  &\leq \Big[1-\big|m_k^{[A\cup\{i\}]}\big|^2\Big]^{1+\epsilon}  \big|\delta_k m_{j}^{[A\cup\{i,k\}]}(\sigma_i)\big|^{2+\epsilon}\\
		&\leq \big|\delta_k m_{j}^{[A\cup\{i,k\}]}(\sigma_i)\big|^{2+\epsilon},
		\end{split}\]
so that
	\[\begin{split}
		\mathbb{E} \big|\delta_i m_j^{[A\cup\{i\}]}\big|^{2+\epsilon}(t)  &\leq \frac{C_{t,\epsilon}}{N^{1+\epsilon/2}} + (1+3\epsilon) \int_0^t \mathbb{E}\big|\delta_i m_j^{[A\cup\{i\}]}\big|^{2+\epsilon}(s)\, ds\\
		&\hspace{0.5cm} + (1+ \epsilon)\sup_{\substack{ k\not \in A\cup\{j\}, \\\sigma_i=\pm1 }} \int_0^t \mathbb{E} \big|\delta_k m_{j}^{[A\cup\{i,k\}]}(\sigma_i)\big|^{2+\epsilon} (s)\,ds.
		\end{split} \]			
Combining Gronwall's inequality with integration by parts then shows that, uniformly in $\sigma\in \{-1,1\}^{|A|}$, we have
	\[\mathbb{E} \big|\delta_i m_j^{[A\cup\{i\}]}\big|^{2+\epsilon}(t)\leq  \frac{C_{t,\epsilon}}{N^{1+\epsilon/2}} + (1+\epsilon)\sup_{\substack{ k\not \in A\cup\{j\}, \\\sigma_i=\pm1 }} \int_0^t e^{(1+3\epsilon)(t-s)}\,\mathbb{E} \big|\delta_k m_{j}^{[A\cup\{i,k\}]}(\sigma_i)\big|^{2+\epsilon} (s)\,ds.\]
		
Now, since $A\subset \{1,\dots,N\}$ was arbitrary, we may iterate the last bound by viewing the rows of $ (g_{ij})_{1\leq i<j\leq N}$ successively as Brownian motions at time $t$ and of speed $1/N$. This way, we obtain that
		\[\begin{split}
		&\mathbb{E} \big|\delta_i m_j^{[ A\cup\{i\}]}\big|^{2+\epsilon} \leq  \frac{C_{t,\epsilon}}{N^{1+\epsilon/2}} + (1+\epsilon) \!\sup_{\substack{ k_1 \neq i, j; \\\sigma_i=\pm1 }} \int_0^t e^{(1+3\epsilon)(t-s_1)}\,\mathbb{E} \big|\delta_{k_1} m_{j}^{[A\cup\{i,k_1\}]}(\sigma_i)\big|^{2+\epsilon}(s_1)ds_1\\
		&\leq  \frac{C_{t,\epsilon}}{N^{1+\epsilon/2}} \bigg[ 1+ \frac{(1+\epsilon)(e^{(1+3\epsilon)t}-1)}{(1+3\epsilon)}\bigg] \\
		&\hspace{0.5cm}+(1+\epsilon)^2 \!\! \!\! \sup_{\substack{ k_1 \neq i,j; \\k_2\neq i,j,k_1; \\\sigma_i,\sigma_{k_1}=\pm1 }}\!\! \int_0^t \int_0^t  e^{(1+3\epsilon)(2t-s_1-s_2)}\,\mathbb{E} \big| \delta_{k_2}m_j^{[A\cup\{i,k_1,k_2\}]}(\sigma_i, \sigma_{k_1})\big|^{2+\epsilon} (s_1;s_2)\,ds_1ds_2\\
		&\leq \frac{C_{t,\epsilon}}{N^{1+\epsilon/2}} \Big[ 1+ \frac{(1+\epsilon)(e^{(1+3\epsilon)t}-1)}{(1+3\epsilon)} +\dots + \frac{(1+\epsilon)^{n-1}(e^{(1+3\epsilon)t}-1)^{n-1}}{(1+3\epsilon)^{n-1}}\Big] \\
		&\hspace{0.5cm} +(1+\epsilon)^{n-1} \!\!\sup_{\substack{ k_1 \neq i,j; \\k_2\neq i,j,k_1;  }} \ldots \sup_{\substack{ k_n \neq i,j, k_1,\dots, k_{n-1}; \\\sigma_i,\sigma_{k_1},\dots, \sigma_{k_{n-1}} =\pm1 }} \int_0^t \int_0^t \ldots \int_0^t  e^{(1+3\epsilon)\sum_{m=1}^n(t-s_m)} \,\\
		&\hspace{1.5cm}\times \mathbb{E} \big|\delta_{k_n}m_j^{[A\cup\{i,k_1,\dots, k_{n-1}\}]}(\sigma_i, \sigma_{k_1},\dots, \sigma_{k_{n-1}})\big|^{2+\epsilon} (s_1;s_2; \dots;s_n)\,ds_1ds_2\ldots ds_n
		\end{split}\]
for every $n\leq N-|A|$. Here, we used similarly as above the notation
		\[\langle f \rangle^{[\{i,k_1,\dots, k_{n-1}\}]}(s_1;s_2; \dots;s_n) \equiv \langle f \rangle^{[\{i,k_1,\dots, k_{n-1}\}]}\big( g_{i\bullet} (s_1); g_{k_1\bullet}(s_2); \dots; g_{k_{n-1}\bullet}(s_n)\big) \]
for an observable $f$. In particular, the above estimate implies for $  t < \log 2$ and $\epsilon>0$ sufficiently small that, uniformly in $\sigma\in \{-1,1\}^{|A|}$, we have
		\[ \mathbb{E} \,\big|m_{ij}^{[A]}\big|^{2+\epsilon} 
		\leq  \frac{C_{t,\epsilon}}{N^{1+\epsilon/2}}. \]
\end{proof}

{\noindent\emph{Remarks:}}
\begin{itemize}
\item[1)]  By optimizing the Gronwall argument from the previous proof, one can improve the lemma to hold for all times $t\geq 0$ that satisfy
		\[0\leq t< \max_{x\in [2;\infty)} x\log \bigg[ 1+ x^{-1}\Big(\frac13+\frac x3\Big)^{-1}\Big(\frac23+\frac2{3x}\Big)^{-2}\bigg] \approx 0.83.\]
\item[2)] The bound provided in Lemma~\ref{lm:L2bndmij} is clearly uniform in time. More precisely,  we have
		\begin{equation}\label{eq:L2bnddimjA}
		\begin{split}
		& \sup_{s_{ij}\in [0;t], \,1\leq i<j\leq N}\mathbb{E} \,\big|m_{ij}^{[A]}\big|^{2+\epsilon}\big( (g_{ij}(s_{ij}))_{1\leq i<j\leq N}\big)\\
		& \leq \sup_{s_{ij}\in [0;t], \,1\leq i<j\leq N}\mathbb{E} \, \big|\delta_{i}m_j^{[A\cup\{i\}]}\big|^{2+\epsilon} \big( (g_{ij}(s_{ij}))_{1\leq i<j\leq N}\big)\leq  \frac{C_{t,\epsilon} }{N^{1+\epsilon/2}}
		\end{split}
		\end{equation}
		uniformly in $\sigma\in \{-1,1\}^{|A|}$ and $t<\log 2$.
\item[3)] The estimate for $ m_{ij}^{[A]}$ in $ L^{2+\epsilon}(\mathbb{P})$, rather than $L^2(\mathbb{P})$, is required to obtain an estimate for the three point functions $m_{ijk}$ in $L^2(\mathbb{P})$, see Lemma \ref{lm:L2bndmijk} below. The previous proof can also be adapted to bound higher moments of $|m_{ij}^{[A]}|$. If one applies It\^{o}'s lemma to the $L^p$-norm and chooses appropriate new exponents in Young's inequality, the same argument proves that for any $p\in [2;\infty)$,  there exists some sufficiently small $ t=t_p>0$ with
		\[ \sup_{\sigma \in \{-1,1\}^{|A|}}\mathbb{E}\, \big|m_{ij}^{[A]}\big|^{p} \leq  \frac{C_{t,p } }{N^{p/2}}. \]
	Similar remarks apply to the remaining arguments in this paper. In particular, adapting the proof of Lemma \ref{lm:hTAP1} below yields the validity of the TAP equations \eqref{eq:hTAP1} in $L^p(\mathbb{P})$. If we choose $p\geq 2$ sufficiently large, this also shows that the TAP equations \ref{eq:hTAP1} hold simultaneously for all $m_i$ with high probability (however, only for sufficiently small times $t=t_p>0$).
\end{itemize}

In the next section, we will also need rough bounds on the three point functions.
\begin{lemma}\label{lm:L2bndmijk} Let  $0\leq  t < \log 2$, let $A\subset \{1,\dots, N\}$ and choose $\epsilon>0$ sufficiently small. Then, for some $C_{\epsilon}>0$, independent of $N$, $t$ and $A\subset \{1,\dots,N\}$, we have that
		\[ \sup_{\sigma \in \{-1,1\}^{|A|}}\mathbb{E}\, \big|m_{ijk}^{[A]}\big|^{2} \leq  \frac{C_{t,\epsilon}}{N^{1+\epsilon/2}} \]
for all $i\neq j, i\neq k, j\neq k$ and $i,j,k\not \in A$.
\end{lemma}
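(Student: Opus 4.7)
The plan is to mirror the dynamical approach of Lemma \ref{lm:L2bndmij} at the level of the three-point function. Starting from the identity \eqref{eq:condid2}, we have the pointwise bound
\[ |m_{ijk}^{[A]}|^2 \;\leq\; 2\,|\delta_i m_{jk}^{[A\cup\{i\}]}|^2 \;+\; 8\,|m_{ik}^{[A]}|^2\,|\delta_i m_j^{[A\cup\{i\}]}|^2,\]
so it suffices to bound both terms in expectation. The second, product term is the easier one: Cauchy--Schwarz followed by the interpolation $\mathbb{E}|Y|^4 \leq \|Y\|_\infty^{2-\epsilon}\,\mathbb{E}|Y|^{2+\epsilon}$ reduces it to the $L^{2+\epsilon}$ estimates of Lemma \ref{lm:L2bndmij} and \eqref{eq:L2bnddimjA}, yielding $\mathbb{E}|m_{ik}^{[A]}|^2|\delta_i m_j^{[A\cup\{i\}]}|^2 \leq C_{t,\epsilon}/N^{1+\epsilon/2}$ uniformly in $A$ and $\sigma$.

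For $\mathbb{E}|\delta_i m_{jk}^{[A\cup\{i\}]}|^2$, I would view $(g_{il}(s))_{l\notin A\cup\{i\}}$ as Brownian motions of speed $1/N$, noting that $\delta_i m_{jk}^{[A\cup\{i\}]}(0)=0$ since at $s=0$ the reduced Hamiltonian $H_N^{[A\cup\{i\}]}$ is independent of $\tau_i$. Gaussian integration by parts gives
\[ \partial_{g_{il}} m_{jk}^{[A\cup\{i\}]} = \tau_i\, m_{ljk}^{[A\cup\{i\}]}, \qquad \partial_{g_{il}}^2 m_{jk}^{[A\cup\{i\}]} = -2\bigl[m_l m_{ljk} + m_{lj}m_{lk}\bigr]^{[A\cup\{i\}]},\]
so It\^o's formula represents $\delta_i m_{jk}^{[A\cup\{i\}]}(t)$ as a stochastic integral with integrand $\epsilon_i m_{ljk}^{[A\cup\{i\}]}$ plus a drift of order $1/N$ of the form $-N^{-1}\sum_l \delta_i[m_l m_{ljk}+m_{lj}m_{lk}]^{[A\cup\{i\}]}$. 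Applying It\^o to $|X|^2$ with $X=\delta_i m_{jk}^{[A\cup\{i\}]}$, taking expectation, and handling the cross term between $X$ and the drift by AM--GM leads to integrals involving $N^{-1}\sum_l \mathbb{E}|\epsilon_i m_{ljk}|^2$, $\mathbb{E}|X|^2$, and the square of the drift. A single further application of \eqref{eq:condid2} decomposes each $m_{ljk}^{[A\cup\{i\}]}$ on the right-hand side into the piece $[1-(m_l^{[A\cup\{i\}]})^2]\delta_l m_{jk}^{[A\cup\{i,l\}]}$, which feeds back into the Gronwall quantity, and the piece $-2m_l^{[A\cup\{i\}]} m_{lk}^{[A\cup\{i\}]}\delta_l m_j^{[A\cup\{i,l\}]}$, which is $O(N^{-1-\epsilon/2})$ after Cauchy--Schwarz and the interpolation argument of the previous paragraph. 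The drift-squared contribution is treated identically, the additional $|m_{lj}|^2|m_{lk}|^2$ pieces being again $O(N^{-1-\epsilon/2})$ after summation over $l$ and division by $N$.

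Setting $G(t):=\sup_{A,\sigma,i,j,k}\mathbb{E}|\delta_i m_{jk}^{[A\cup\{i\}]}(t)|^2$, where the supremum ranges over admissible $A$, boundary configurations $\sigma\in\{-1,1\}^{|A|}$, and distinct $i,j,k\notin A$, the estimates above close to a Gronwall inequality
\[ G(t) \;\leq\; \frac{C_{t,\epsilon}}{N^{1+\epsilon/2}} + C_0\int_0^t G(s)\,ds,\]
for an absolute constant $C_0$. Gronwall's inequality then yields $G(t)\leq C_{t,\epsilon}/N^{1+\epsilon/2}$ for every $t<\log 2$, and substitution into the pointwise bound of the first paragraph completes the proof. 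The main obstacle is precisely the control of these residual lower-order terms in the quadratic variation and the drift squared that are not themselves three-point functions: it is essential here that Lemma \ref{lm:L2bndmij} provides an $L^{2+\epsilon}$ bound rather than only $L^2$, because the interpolation $\mathbb{E}|m_{lj}|^4 \leq \|m_{lj}\|_\infty^{2-\epsilon}\mathbb{E}|m_{lj}|^{2+\epsilon}$ is what upgrades the two-point estimate into the $N^{-1-\epsilon/2}$ decay of products like $|m_{lj}|^2|m_{lk}|^2$ and $|m_{lk}|^2|\delta_l m_j|^2$; without that extra $\epsilon$ of integrability the product bounds would only be of size $N^{-1}$ and the Gronwall recursion would fail to reach the target scale $N^{-1-\epsilon/2}$.
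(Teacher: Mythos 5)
Your decomposition via \eqref{eq:condid2}, the treatment of the product term $m_{ik}^{[A]}\delta_i m_j^{[A\cup\{i\}]}$ by Cauchy--Schwarz plus interpolation against the $L^{2+\epsilon}$ bounds of Lemma \ref{lm:L2bndmij}, the It\^o representation of $\delta_i m_{jk}^{[A\cup\{i\}]}$ with vanishing initial condition, and the identification of why the extra $\epsilon$ of integrability is indispensable all match the paper's argument. The gap is in the final closure. You claim the estimates yield a standard Gronwall inequality $G(t)\leq C N^{-1-\epsilon/2}+C_0\int_0^t G(s)\,ds$ for $G(t)=\sup_{A,\sigma,i,j,k}\mathbb{E}|\delta_i m_{jk}^{[A\cup\{i\}]}(t)|^2$. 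If that inequality were available, Gronwall would give $G(t)\leq Ce^{C_0t}N^{-1-\epsilon/2}$ for \emph{all} $t\geq 0$, and the hypothesis $t<\log 2$ would be irrelevant --- a sign that the inequality cannot be what the estimates actually produce. Indeed it is not: after applying \eqref{eq:condid2} inside the quadratic variation, the term that appears in the integrand is $\mathbb{E}\,|\delta_l m_{jk}^{[A\cup\{i,l\}]}|^2(s)$, where only the $i$-th row of couplings sits at time $s$; the $l$-th row --- the one through which $\delta_l$ acts --- is still at the \emph{full} time $t$. This quantity is not of the form $G(s)$, and at $s=0$ it does not vanish, so it cannot be fed back into a Gronwall recursion in the single time variable $s$. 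It is controlled only by the time-uniform supremum over all configurations $(s_{ij})\in[0;t]$.

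The correct mechanism, as in the proof of Lemma \ref{lm:L2bndmij}, is therefore two-staged: Gronwall in the active row's time absorbs only the diagonal term $\int_0^t\mathbb{E}|\delta_i m_{jk}^{[A\cup\{i\}]}|^2(s)\,ds$ and converts the off-diagonal term into $\sup_l\int_0^t e^{t-s}\,\mathbb{E}|\delta_l m_{jk}^{[A\cup\{i,l\}]}|^2(s)\,ds$, which must then be \emph{iterated} over successive rows (each new row run from $0$ to $t$, with the multi-time notation $(s_1;s_2;\dots)$). Each iteration contributes a factor $\int_0^t e^{t-s}\,ds=e^t-1$, and the resulting geometric series converges precisely when $e^t-1<1$, i.e.\ $t<\log 2$. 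Equivalently, one may close a self-consistent bound $\bar G\leq CN^{-1-\epsilon/2}+(e^t-1)\bar G$ for the time-uniform supremum $\bar G$, which is solvable only for $e^t<2$. Your write-up needs this iteration (or fixed-point) step in place of the single Gronwall application; otherwise the argument neither closes nor explains where the temperature restriction enters.
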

\begin{proof}
Lemma \eqref{lm:L2bndmij} and the Cauchy-Schwarz inequality combine to show that
		\begin{equation}\label{eq:bndmijprod}  \begin{split} \big\|  m_{ik}^{[A]} \delta_i m_j^{[A\cup\{i\}]}\big\|_2^2 &\leq  \big\|m_{ik}^{[A]}\big\|_4^2 \big\|\delta_i m_j^{[A\cup\{i\}]}\big\|_4^2\\
		&\leq \big\|m_{ik}^{[A]}\big\|_{2+\epsilon}^{1+\epsilon/2} \big\|\delta_i m_j^{[A\cup\{i\}]}\big\|_{2+\epsilon}^{1+\epsilon/2} \leq \frac{C_{t,\epsilon}}{N^{1+\epsilon/2}}. \end{split}\end{equation}
By the identity \eqref{eq:condid2}, it is therefore enough to control $ \delta_{i}m_{jk}^{[A\cup\{i\}]}$. Differentiating the identity \eqref{eq:dynid} with respect to the external field in the direction of $\sigma_k$, we find that
		\[ \begin{split}
		 \delta_i m_{jk}^{[A\cup\{i\} ]} &=  \sum_{\substack{ l \not \in A }}\int_0^t  \eps_i\, m_{jkl}^{[A\cup\{i\}  ]}   (s)\,dg_{il}(s)  -  \sum_{\substack{ l \not \in A} }\int_0^t  \delta_i \Big(m_{kl}^{[A\cup\{i\}]} m_{jl}^{[A\cup\{i\}  ]}\Big)(s) \frac{ds}N \\
 &- \sum_{\substack{ l \not \in A} }\int_0^t  \delta_i \Big(m_{l}^{[A\cup\{i\}]} m_{jkl}^{[A\cup\{i\}  ]}\Big)(s) \frac{ds}N.
		\end{split}
		\]	
We proceed as in Lemma \ref{lm:L2bndmij}, using the It\^{o} isometry followed by Young's inequality. If we also apply the trivial bound to the summands with $l \in \{j, k\}$ and insert the bounds of Lemma \ref{lm:L2bndmij} for the two-point functions, we conclude that
		\[\begin{split} \mathbb{E} \big|\delta_i m_{jk}^{[A\cup\{i\} ]}\big|^2 & \leq \frac{C_{t,\epsilon}}{N^{1+\epsilon/2}} +   \sum_{\substack{ l \not \in A \cup\{j,k\} }}\int_0^t \mathbb{E} \big| \eps_i\, m_{jkl}^{[A\cup\{i\}  ]}  \big|^2 (s)\frac{ds}N\\
		&\quad -2 \sum_{\substack{ l \not \in A \cup\{j,k\}} }\int_0^t \mathbb{E} \big(\delta_i m_{jk}^{[A\cup\{i\} ]}\big) (s)\delta_i \Big(m_{l}^{[A\cup\{i\}]} m_{jkl}^{[A\cup\{i\}  ]}\Big)(s) \frac{ds}N\\
		& \leq \frac{C_{t,\epsilon}}{N^{1+\epsilon/2}} + \int_0^t \mathbb{E} \big|\delta_i m_{jk}^{[A\cup\{i\} ]}\big|^2 (s) \,ds\\
		&\quad +  \sup_{\substack{ l\not \in A\cup\{j,k\}, \\\sigma_i=\pm1 }}\int_0^t \mathbb{E} \Big(1+ \big| m_{l}^{[A\cup\{i\}]}\big|^2\Big)\big|   m_{jkl}^{[A\cup\{i\}  ]}  \big|^2 (s)\,ds ,\\
		\end{split}\]
uniformly in $A\subset \{1,\dots,N\}$. Using once more the identity \eqref{eq:condid2} together with the results of Lemma \ref{lm:L2bndmij} and the remarks following its proof, we have that
		\[ \sup_{s\in[0;t]}  \mathbb{E}   \Big| \big|   m_{jkl}^{[A\cup\{i\}  ]}  \big|^2 (s) -  \Big[1-\big| m_l^{[A\cup\{i\}]}\big|^2\Big]^2 \big|\delta_l m_{jk}^{[A\cup \{i,l\} ]}\big|^2(s) \Big|\leq \frac{C_{t,\epsilon}}{N^{1+\epsilon/2}} \]
and hence
		\[\begin{split} \mathbb{E} \big|\delta_i m_{jk}^{[A\cup\{i\} ]}\big|^2 & \leq \frac{C_{t,\epsilon}}{N^{1+\epsilon/2}} + \int_0^t \mathbb{E} \big|\delta_i m_{jk}^{[A\cup\{i\} ]}\big|^2 (s) \,ds  +  \sup_{\substack{ l\not \in A\cup\{j,k\}, \\\sigma_i=\pm1 }}\int_0^t \mathbb{E}\big|\delta_l m_{jk}^{[A\cup \{i,l\} ]}\big|^2(s)\,ds.
		\end{split}\]
Gronwall's Lemma implies that
		\[\begin{split} \mathbb{E} \big|\delta_i m_{jk}^{[A\cup\{i\} ]}\big|^2 & \leq \frac{C_{t,\epsilon}}{N^{1+\epsilon/2}}   +  \sup_{\substack{ l\not \in A\cup\{j,k\}, \\\sigma_i=\pm1 }}\int_0^t e^{t-s}\,\mathbb{E}\big|\delta_l m_{jk}^{[A\cup \{i,l\} ]}\big|^2(s)\,ds
		\end{split}\]
and by iterating this estimate $N-|A|$ times, as in the proof of Lemma \ref{lm:L2bndmij}, we find
		\[ \mathbb{E} \big|\delta_i m_{jk}^{[A\cup\{i\} ]}\big|^2 \leq \frac{C_{t,\epsilon}}{N^{1+\epsilon/2}}\]
for $t<\log 2$, uniformly in $A\subset \{1,\dots, N\}$. Together with \eqref{eq:bndmijprod}, this proves the claim.
\end{proof}
\noindent \emph{Remark:}
\begin{itemize}
\item[1)] Viewing the $(g_{ij})_{1\leq i< j\leq N}$ dynamically as in the previous proof, the same arguments imply also that, uniformly in $\sigma\in \{-1,1\}^{|A|}$, we have for $t < \log 2$ that
		\[
		\begin{split}
		&\sup_{s_{ij}\in [0;t], \,1\leq i<j\leq N }\mathbb{E} \,\big|m_{ijk}^{[A]}\big|^{2 }\big((g_{ij}(s_{ij}))_{1\leq i<j\leq N}\big) \\
		&\hspace{1cm}\leq \sup_{ s_{ij}\in [0;t], \,1\leq i<j\leq N }\mathbb{E} \, \big|\delta_{i}m_{jk}^{[A\cup\{i\}]}\big|^{2 }\big((g_{ij}(s_{ij}))_{1\leq i<j\leq N}\big)+ \frac{C_{t,\epsilon}}{N^{1+\epsilon/2}}\leq  \frac{C_{t,\epsilon}}{N^{1+\epsilon/2}}.
		\end{split}
		\]
\end{itemize}


\section{Proof of the Hierarchical TAP Equations}\label{sec:hTAP}

Using the bounds on the size of the correlation functions, we are now ready to prove the validity of the hierarchical TAP equations for the one and two point functions in the sense of $ L^2(\mathbb{P})$. This will prove in particular our main result Theorem \ref{thm:main}.

\begin{lemma}\label{lm:hTAP1} Let  $0\leq  t < \log 2$. Then, for some $C=C_t>0$ independent of $N$, we have
		\[ \mathbb{E} \Big[m_i - \tanh\Big( h + \sum_{j\neq i} g_{ij}m_j^{(i)}\Big)\Big]^2\leq \frac{C}N.\]
\end{lemma}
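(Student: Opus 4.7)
The plan is to combine an exact cavity identity for $m_i$ with a Gaussian averaging over the $i$-th row of couplings, reducing the claim to the $L^2$ decay of two-point functions proved in Lemma~\ref{lm:L2bndmij}. Summing over $\sigma_i=\pm 1$ in the Gibbs weight yields the exact identity
\[
m_i=\frac{\langle\sinh W\rangle^{(i)}}{\langle\cosh W\rangle^{(i)}},\qquad W:=h+\sum_{j\ne i}g_{ij}\sigma_j.
\]
Setting $a_i:=h+\sum_{j\ne i}g_{ij}m_j^{(i)}$ and $Y:=W-a_i=\sum_{j\ne i}g_{ij}(\sigma_j-m_j^{(i)})$, so that $\langle Y\rangle^{(i)}=0$, the hyperbolic angle-addition formulas together with $\cosh^2-\sinh^2=1$ give
\[
m_i-\tanh(a_i)=\frac{\sech(a_i)\,\langle\sinh Y\rangle^{(i)}}{\langle\cosh(a_i+Y)\rangle^{(i)}}.
\]
Since $\sech(a_i)\le 1$ and $\langle\cosh(a_i+Y)\rangle^{(i)}\ge 1$, this yields the pointwise bound $|m_i-\tanh(a_i)|\le|\langle\sinh Y\rangle^{(i)}|$, and the task reduces to proving $\mathbb{E}|\langle\sinh Y\rangle^{(i)}|^2\le C/N$.

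The central observation is that $\langle\cdot\rangle^{(i)}$ and the coefficients $(m_k^{(i)})_k$ depend only on $(g_{kl})_{k,l\ne i}$, hence are independent of the family $(g_{ij})_{j\ne i}$ that enters $Y$. Using two independent replicas $\sigma,\sigma'$ drawn from $\langle\cdot\rangle^{(i),\otimes 2}$ and integrating out $(g_{ij})_{j\ne i}$ first via $\mathbb{E}[e^{\alpha X}]=e^{\alpha^2\operatorname{Var}(X)/2}$ for centered Gaussian $X$, one finds
\[
\mathbb{E}_{(g_{ij})_{j\ne i}}\bigl[\sinh Y(\sigma)\sinh Y(\sigma')\bigr]=\tfrac{1}{2}\bigl(e^A-e^B\bigr),
\]
with $A=\tfrac{t}{2N}\sum_j(\sigma_j+\sigma'_j-2m_j^{(i)})^2$ and $B=\tfrac{t}{2N}\sum_j(\sigma_j-\sigma'_j)^2=t(1-R_{12})$. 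A short algebraic simplification produces the key identity
\[
A-B=\tfrac{2t}{N}\sum_j(\sigma_j-m_j^{(i)})(\sigma'_j-m_j^{(i)})=:2t\,R_{12}^{\ast},
\]
and, by independence of the two replicas under the product measure, $\langle R_{12}^{\ast}\rangle^{(i),\otimes 2}=0$ together with $\langle(R_{12}^{\ast})^2\rangle^{(i),\otimes 2}=\tfrac{1}{N^2}\sum_{j,k}(m_{jk}^{(i)})^2$.

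Writing $e^A-e^B=e^B(e^{2tR_{12}^{\ast}}-1)$ and Taylor-expanding, the quadratic-in-$R_{12}^{\ast}$ remainder is directly controlled by $C\,\mathbb{E}\langle(R_{12}^{\ast})^2\rangle^{(i),\otimes 2}\le C/N$ via Lemma~\ref{lm:L2bndmij} on the off-diagonal terms and the trivial bound $|m_{jj}^{(i)}|\le 1$ on the diagonal. For the linear term $t\,\mathbb{E}\langle e^B R_{12}^{\ast}\rangle^{(i),\otimes 2}$, one uses the decomposition $R_{12}=q_N^{(i)}+R_{12}^{\ast}+\Delta_\sigma+\Delta_{\sigma'}$ with $\Delta_\sigma=\tfrac{1}{N}\sum_j m_j^{(i)}(\sigma_j-m_j^{(i)})$ to factorize $e^B=e^{t-tq_N^{(i)}}e^{-tR_{12}^{\ast}}e^{-t\Delta_\sigma}e^{-t\Delta_{\sigma'}}$ and then Taylor-expands each non-constant exponent; since $\langle R_{12}^{\ast}\rangle^{(i),\otimes 2}=0$, the zeroth-order contribution vanishes, and the leading non-vanishing term reduces, by independence of $\sigma$ and $\sigma'$ under the product measure, to sums of the form $\tfrac{1}{N^3}\sum_{j,k,l}\mathbb{E}|m_k^{(i)}m_l^{(i)}m_{jk}^{(i)}m_{jl}^{(i)}|$, which Cauchy--Schwarz and Lemma~\ref{lm:L2bndmij} bound by $C/N$.

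The main obstacle is precisely this linear-in-$R_{12}^{\ast}$ term: a naive estimate $|\mathbb{E}\langle e^B R_{12}^{\ast}\rangle|\le C\sqrt{\mathbb{E}\langle(R_{12}^{\ast})^2\rangle}$ only yields $O(1/\sqrt{N})$, and extracting the additional factor of $N^{-1/2}$ requires a careful use of the centering $\langle R_{12}^{\ast}\rangle^{(i),\otimes 2}=0$ together with the two-point decay of Lemma~\ref{lm:L2bndmij}. This hidden cancellation is essentially the cavity-framework analogue of the Onsager reaction correction that will appear explicitly in the classical TAP equations derived in Corollary~\ref{cor:main}.
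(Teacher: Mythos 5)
Your proposal is correct, but it takes a genuinely different route from the paper. The paper reduces the claim, via Lipschitz continuity of $\tanh$, to bounding $\mathbb{E}\big[\tanh^{-1}(m_i)-h-\sum_{j\neq i}g_{ij}m_j^{(i)}\big]^2$, and then runs the $i$-th row of couplings as Brownian motions: It\^o's lemma expresses the difference as a martingale plus drift whose integrands ($m_j-m_j^{(i)}-m_im_{ij}/(1-m_i^2)$, $m_jm_{ij}/(1-m_i^2)$, etc.) are controlled by Lemma \ref{lm:L2bndmij}, with $m_j-m_j^{(i)}$ handled by a second application of It\^o. You instead use the exact static cavity identity $m_i-\tanh(a_i)=\sech(a_i)\langle\sinh Y\rangle^{(i)}/\langle\cosh W\rangle^{(i)}$, integrate out the $i$-th row via the Gaussian moment generating function over two replicas, and exploit the centering $\langle R_{12}^{\ast}\rangle^{(i),\otimes 2}=0$. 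Your identification of the danger point is exactly right: the naive Cauchy--Schwarz bound on $t\langle e^BR_{12}^{\ast}\rangle$ only gives $O(N^{-1/2})$, and the rescue is the product structure $\langle f(\sigma)g(\sigma')R_{12}^{\ast}\rangle^{(i),\otimes 2}=\frac1N\sum_j\langle f\,(\sigma_j-m_j^{(i)})\rangle^{(i)}\langle g\,(\sigma'_j-m_j^{(i)})\rangle^{(i)}$, which kills every term in which one replica contributes only a constant and reduces the survivors to $\frac{1}{N^3}\sum_{j,k,l}m_k^{(i)}m_l^{(i)}m_{jk}^{(i)}m_{jl}^{(i)}$ plus remainders dominated by $\frac{1}{N^2}\sum_{j,k}(m_{jk}^{(i)})^2$; all of these are $O(1/N)$ in expectation by Lemma \ref{lm:L2bndmij} (note that quantities such as $\langle\Delta_\sigma^2\rangle$ are only $O(N^{-1/2})$ a priori, so it is essential, as your pairing argument ensures, that they only ever appear squared or multiplied by a comparably small factor). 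Your route is closer in spirit to Talagrand's cavity treatment and avoids further stochastic calculus once Lemma \ref{lm:L2bndmij} is in hand; the paper's dynamical route has the advantage of extending verbatim to the two-point equation of Lemma \ref{lm:hTAP2} and to $L^p$ versions.
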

\begin{proof}
By the Lipschitz continuity of $\tanh(\cdot)$, the claim follows if we show that
		\[ \mathbb{E} \Big[\tanh^{-1}(m_i) - \Big( h + \sum_{j\neq i} g_{ij}m_j^{(i)}\Big)\Big]^2\leq \frac{C}N.\]
We view the $(g_{ij})_{1\leq j\leq N}$ dynamically as Brownian motions at time $t$ and of speed $1/N$ so that a straight forward application of It\^o's Lemma implies that
		\[\begin{split}
		\tanh^{-1}(m_i) - \Big( h + \sum_{j\neq i} g_{ij}m_j^{(i)}\Big) & = \sum_{j\neq i } \int_0^t \bigg(m_j- m_j^{(i)} - \frac{m_im_{ij}}{1-m_i^2} \bigg) (s)\,dg_{ij}(s)  \\
		&\hspace{0.5cm} -\int_0^t \sum_{j\neq i} \bigg(\frac{m_j m_{ij}}{1-m_i^2}-\frac{m_i m_{ij}^2}{1-m_i^2}-\frac{m_i^3m_{ij}^2}{(1-m_i^2)^2} \bigg)(s) \frac{ds}N.
		\end{split}\]
Recalling that $ m_{ij} /(1-m_i^2) =\delta_i m_j^{[i]}$, we use $|m_i|\leq 1$, $ |\delta_i m_j^{[i]}|\leq 2$ to conclude that
		\[\begin{split}
		\bigg\| \int_0^t \sum_{j\neq i} \bigg(\frac{m_j m_{ij}}{1-m_i^2}-\frac{m_i m_{ij}^2}{1-m_i^2}-\frac{m_i^3m_{ij}^2}{(1-m_i^2)^2} \bigg)(s) \frac{ds}N\bigg\|_2
		\leq C\sum_{j\neq i}\int_0^t \|\delta_i m_j^{[i]}(s)\|_2 \,\frac{ds}N
		\end{split}\]
By the observation \eqref{eq:L2bnddimjA}	after the proof of Lemma \ref{lm:L2bndmij}, this implies that
		\begin{equation}\label{eq:L2TAP1driftbnd} \mathbb{E}\bigg[ \int_0^t \sum_{j\neq i} \bigg(\frac{m_j m_{ij}}{1-m_i^2}-\frac{m_i m_{ij}^2}{1-m_i^2}-\frac{m_i^3m_{ij}^2}{(1-m_i^2)^2} \bigg)(s) \frac{ds}N\bigg]^2
\leq \frac{C}N.\end{equation}
Similarly, it follows that
		\begin{equation} \label{eq:L2TAP1martingale1bnd} \mathbb{E}\bigg[ \int_0^t \sum_{j\neq i }  \frac{m_im_{ij}}{1-m_i^2}   (s)\,dg_{ij}(s) \bigg]^2 \leq C \sum_{j\neq i}\int_0^t \mathbb{E}\big|\delta_i m_j^{[i]}(s)\big|^2 \,\frac{ds}N \leq  \frac{C}{ N}.\end{equation}

Hence, it remains to control the size of
		\[ \mathbb{E} \bigg[\int_0^t \sum_{j\neq i } \big(m_j- m_j^{(i)}\big) (s)\,dg_{ij}(s)\bigg]^2 = \int_0^t \sum_{j\neq i}\mathbb{E} \big(m_j- m_j^{(i)}\big)^2 (s)\frac{ds}{N}.\]
To this end, we use that $m_j = \langle \sigma_j \rangle = \big \langle m_j^{[i]}\big\rangle$ so that
		\[\mathbb{E} \bigg(\int_0^t \sum_{j\neq i } \big(m_j- m_j^{(i)}\big) (s)\,dg_{ij}(s)\bigg)^2\leq \frac{t}N \sum_{j\neq i}\sup_{s\in [0;t]}\sup_{\sigma_i=\pm1}\mathbb{E} \big(m_j^{[i]}- m_j^{(i)}\big)^2 (s) . \]
Applying once more It\^o's Lemma yields
		\[\big( m_j^{[i]} - m_j^{(i)}\big)(s) =  \sigma_i \int_0^s \sum_{k\neq i } m_{jk}^{[i]} (u) \,dg_{ik}(u) - \int_0^s \sum_{k\neq i } m_k^{[\{i\}] } m_{jk}^{[\{i\}]}(u) \frac{du}{N}, \]
so that the estimate \eqref{eq:L2bnddimjA} implies
		\[ \sup_{s\in [0;t]}\sup_{\sigma_i=\pm1}\mathbb{E} \big(m_j^{[i]}- m_j^{(i)}\big)^2 (s)\leq \frac{C}N.\]
Thus, we find that
		\[\mathbb{E} \bigg(\int_0^t \sum_{j\neq i } \big(m_j- m_j^{(i)}\big) (s)\,dg_{ij}(s)\bigg)^2\leq  \frac{C}N\]
and together with the bounds \eqref{eq:L2TAP1driftbnd}, \eqref{eq:L2TAP1martingale1bnd}, this proves the claim.
\end{proof}

In order to prove the analogue of the hierarchical TAP equations for the two point functions, we also need the bounds from Lemma \ref{lm:L2bndmijk} and the remark following its proof.
\begin{lemma}\label{lm:hTAP2} Let  $0\leq  t < \log 2$ and assume $\epsilon>0$ to be sufficiently small. Then, for some $C = C_{t,\epsilon}>0$, independent of $N$, we have that
		\[ \mathbb{E} \bigg[m_{ij} -  (1-m_i^2) \sum_{k\neq i} g_{ik}m_{kj}^{(i)} \bigg]^2\leq \frac{ C}{N^{1+\epsilon/2}} .\]
\end{lemma}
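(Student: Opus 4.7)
The plan is to exploit the identity $m_{ij}=(1-m_i^2)\delta_i m_j^{[i]}$ from \eqref{eq:condid}: since $|1-m_i^2|\leq 1$, it is enough to control
\[
X(t):=\delta_i m_j^{[i]}(t)-\sum_{k\neq i}g_{ik}(t)\,m_{kj}^{(i)}
\]
in $L^2(\mathbb{P})$. Viewing the $i$-th row $(g_{ik})_{k\neq i}$ dynamically as Brownian motions at speed $1/N$ and using that $m_{kj}^{(i)}$ does not depend on this row, the identity \eqref{eq:dynid} for $A=\emptyset$ gives
\[
X(t)=\sum_{k\neq i}\int_0^t\!\!\bigl[\eps_i m_{kj}^{[i]}(s)-m_{kj}^{(i)}\bigr]dg_{ik}(s)-\sum_{k\neq i}\int_0^t\!\!\delta_i(m_k^{[i]}m_{kj}^{[i]})(s)\tfrac{ds}{N}=:M(t)-D(t),
\]
and I would bound both contributions separately by $C/N^{1+\epsilon/2}$.

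For the drift $D(t)$, I use the product rule $\delta_i(fg)=\delta_i f\,\eps_i g+\eps_i f\,\delta_i g$. Each factor is bounded a.s.\ by a constant and, by Lemma \ref{lm:L2bndmij} together with its uniform-in-time remark, has $L^{2+\epsilon}(\mathbb{P})$-norm at most $C/N^{1/2}$. Interpolating against $L^\infty$ yields an $L^4$-bound of order $N^{-(1+\epsilon/2)/4}$, so Cauchy--Schwarz shows $\|\delta_i(m_k^{[i]}m_{kj}^{[i]})\|_2\leq C/N^{1/2+\epsilon/4}$. Since the factor $1/N$ exactly compensates the number of summands, the triangle inequality in $L^2$ together with time integration yields $\|D(t)\|_2\leq Ct/N^{1/2+\epsilon/4}$, i.e.\ $\mathbb{E}|D|^2\leq C/N^{1+\epsilon/2}$. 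For the martingale $M(t)$ the Ito isometry reduces the problem to the pointwise estimate $\mathbb{E}\bigl|\eps_i m_{kj}^{[i]}(s)-m_{kj}^{(i)}\bigr|^2\leq C/N^{1+\epsilon/2}$ uniform in $s\in[0;t]$. Noting that $\eps_i m_{kj}^{[i]}(0)=m_{kj}^{(i)}$ (at time zero all row-$i$ couplings vanish and $m_{kj}^{[i]}$ becomes $\sigma_i$-independent), I apply Ito once more to $m_{kj}^{[i]}$ using the cumulant identity $\partial_{g_{il}}m_{kj}^{[i]}=\sigma_i m_{kjl}^{[i]}$ and the Ising-specific identity
\[
\partial_{g_{il}}^2 m_{kj}^{[i]}=-2\, m_l^{[i]}m_{kjl}^{[i]}-2\,m_{kl}^{[i]}m_{jl}^{[i]},
\]
which follows from $\sigma_l^2=1$ applied to the fourth cumulant $\kappa_4(\sigma_k,\sigma_j,\sigma_l,\sigma_l)$. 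After applying $\eps_i$ (so that the $\sigma_i$-odd martingale integrand $\sigma_i m_{kjl}^{[i]}$ is replaced by $\delta_i m_{kjl}^{[i]}$), the martingale part is controlled by Lemma \ref{lm:L2bndmijk} and gives $Cs/N^{1+\epsilon/2}$, while the resulting drift is bilinear in two- and three-point truncated correlations and is bounded exactly as for $D(t)$.

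The hardest part is precisely this second-order Ito correction in the expansion of $m_{kj}^{[i]}$: without the Ising cumulant identity, only the pointwise estimate $|\partial_{g_{il}}^2 m_{kj}^{[i]}|\leq C$ is available, producing a drift of size $\sum_l O(1/N)=O(1)$ that destroys the argument (and, in turn, the naive algebraic splitting $\eps_i m_{kj}^{[i]}-m_{kj}^{(i)}=(\eps_i m_{kj}^{[i]}-m_{kj})+(m_{kj}-m_{kj}^{(i)})$ fails because $m_i\,\delta_i m_{kj}^{[i]}$ has $L^2$-norm only $C/N^{1/2}$). The cumulant formula rewrites the problematic drift as a bilinear combination of small truncated correlations, after which Lemmas \ref{lm:L2bndmij} and \ref{lm:L2bndmijk} supply the required decay. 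Combining the estimates via $\mathbb{E}|X|^2\leq 2\mathbb{E}|M|^2+2\mathbb{E}|D|^2$ then finishes the proof.
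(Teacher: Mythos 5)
Your decomposition $X(t)=M(t)-D(t)$ is exactly the paper's: the paper also expands $m_j^{[i]}(t)$ around $m_j^{(i)}$ via It\^o (Eq.~\eqref{eq:TAP2bnd0}), applies $\delta_i$, multiplies by $(1-m_i^2)$, and is left with the same martingale correction $\sum_k\int_0^t[\eps_i m_{kj}^{[i]}(s)-m_{kj}^{(i)}]\,dg_{ik}(s)$ and drift correction. Your treatment of $M(t)$ — It\^o isometry, then a second It\^o expansion of $m_{kj}^{[i]}(s)-m_{kj}^{(i)}$ whose drift is the bilinear combination $m_l^{[i]}m_{kjl}^{[i]}+m_{kl}^{[i]}m_{jl}^{[i]}$, controlled by Lemmas \ref{lm:L2bndmij} and \ref{lm:L2bndmijk} — is precisely the paper's Eq.~\eqref{eq:ito2pt}, and your second-derivative identity agrees with it. (You do not discuss the diagonal summand $k=j$, where $m_{jj}^{[i]}=1-(m_j^{[i]})^2$; it needs the separate estimate $\sup_s\|m_j^{[i]}(s)-m_j^{(i)}\|_2\le C N^{-1/2}$, but this is a minor omission.)

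The one step that does not go through as written is the drift bound. You claim $\|\delta_i(m_k^{[i]}m_{kj}^{[i]})\|_2\le CN^{-1/2-\epsilon/4}$ from Lemma \ref{lm:L2bndmij} plus interpolation and Cauchy--Schwarz. The product rule gives two terms, $\delta_i m_k^{[i]}\cdot\eps_i m_{kj}^{[i]}$ and $\eps_i m_k^{[i]}\cdot\delta_i m_{kj}^{[i]}$. The first is a product of two factors each of $L^4$-norm $O(N^{-1/4-\epsilon/8})$ by interpolation from \eqref{eq:L2bnddimjA} and Lemma \ref{lm:L2bndmij}, so your argument works there. But in the second term the factor $\eps_i m_k^{[i]}$ is only $O(1)$, so Cauchy--Schwarz in $L^4\times L^4$ buys nothing and Lemma \ref{lm:L2bndmij} alone only gives $\|\delta_i m_{kj}^{[i]}\|_2\le \sup_{\sigma_i}\|m_{kj}^{[i]}(\sigma_i)\|_2\le CN^{-1/2}$; that yields $\mathbb{E}|D|^2\le C/N$, which misses the required $C/N^{1+\epsilon/2}$. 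The quantity $\delta_i m_{kj}^{[i]}$ is a genuinely three-point object, and the needed improvement $\mathbb{E}|\delta_i m_{kj}^{[i]}|^2\le C N^{-1-\epsilon/2}$ is exactly what the proof of Lemma \ref{lm:L2bndmijk} establishes (it is the quantity iterated in that Gronwall argument, cf.\ also \eqref{eq:condid2}). So the gap is fixable by citing Lemma \ref{lm:L2bndmijk} for the drift as well; alternatively one can follow the paper and write the drift integrand as $\delta_i[(m_k^{[i]}m_{kj}^{[i]})(s)-m_k^{(i)}m_{kj}^{(i)}]$, expanding the difference by It\^o as in \eqref{eq:ito2pt2}, after which every term is either a three-point function or a product of two two-point functions and your $L^4$-interpolation argument applies.
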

\begin{proof}
We consider the $ (g_{ik})_{1\leq k\leq N}$ dynamically and use It\^o's Lemma to compute
		\begin{equation}\label{eq:TAP2bnd0}\begin{split} m_j^{[ i ] }   (t) =&\, m_j^{(i ) }   +    \sum_{\substack{ k  \neq i }}\int_0^t  \sigma_i  m_{kj}^{[i]}   (s)\,dg_{ik}(s)   -   \sum_{\substack{ k \neq i} }\int_0^t   m_{k}^{[i]} m_{kj}^{[i]} (s) \frac{ds}N \\
		=&\, m_j^{( i ) }   +    \sum_{\substack{ k  \neq i }}   \sigma_i   g_{ik}m_{kj}^{(i  )}   -   \sum_{\substack{ k \neq i} }    m_{k}^{(i)} m_{kj}^{(i  )} \frac{t}N \\
		&+     \sum_{\substack{ k  \neq i }}\int_0^t  \sigma_i  \big( m_{kj}^{[i]} -m_{kj}^{(i  )}\big)   (s)\,dg_{ik}(s)  \\
		& -   \sum_{\substack{ k \neq i} }\int_0^t  \big( m_{k}^{[i]} m_{kj}^{[i]} (s) -m_{k}^{(i)} m_{kj}^{(i  )} \big)\frac{ds}N.
		\end{split}\end{equation}
If we average the last equation over the spin variable $ \sigma_i\in\{-1,1\}$ and multiply it afterwards by $(1-m_i^2)$, we find with the identity \eqref{eq:condid} that
		\begin{equation}\label{eq:TAP2bnd1}\begin{split}
		  &\Big\| m_{ij} -  (1-m_i^2) \sum_{k\neq i} g_{ik}m_{kj}^{(i)} \Big\|_2\\
		  & \leq   \sup_{\sigma_i =\pm1}  \Big\|   \sum_{\substack{ k  \neq i }}\int_0^t   \big( m_{kj}^{[i]} -m_{kj}^{(i  )}\big)   (s)\,dg_{ik}(s) \Big\|_2\\
		  &\quad  +  \sup_{\sigma_i =\pm1}  \Big\|  \sum_{\substack{ k \neq i} }\int_0^t  \big( m_{k}^{[i]} m_{kj}^{[i]} (s) -m_{k}^{(i)} m_{kj}^{(i  )} \big)\frac{ds}N \Big\|_2\\
		 & \leq \frac C {N^{1/2}}\sup_{s\in [0;t]} \sup_{\substack{  \sigma_i =\pm1}}  \big\|     m_{j}^{[i  ]} (s) - m_{j}^{(i  )} \big\|_2+  \sup_{s\in [0;t]} \sup_{\substack{k\neq i,j\\ \sigma_i =\pm1}}  \big\|       m_{kj}^{[i]}(s) -m_{kj}^{(i  )}   \big\|_2\\
		  & \quad+  \sup_{s\in [0;t]} \sup_{\substack{k\neq i,j\\ \sigma_i =\pm1}} \big\|    \big(m_{k}^{[i]} m_{kj}^{[i]}\big) (s) -m_{k}^{(i)} m_{kj}^{(i  )}    \Big\|_2.
		\end{split}\end{equation}
Hence, let us bound the norms on the right hand side to conclude the claim.

First of all, a straight forward application of Lemma \ref{lm:L2bndmij} and Eq. \eqref{eq:TAP2bnd0} implies that
		\[\frac C {N^{1/2}}\sup_{s\in [0;t]} \sup_{\substack{  \sigma_i =\pm1}}  \big\|     m_{j}^{[i  ]} (s) - m_{j}^{(i  )} \big\|_2\leq \frac{C}{N}.\]
For the two other error terms, we use again It\^o's Lemma which shows that
		\begin{equation}\label{eq:ito2pt}
		\begin{split}
		m_{jk}^{[i  ]}(s) -m_{jk}^{(i  )} =&\;  \sum_{l\neq i } \int_0^s\sigma_i m_{jkl} ^{[i]}(u) \,dg_{il}(u) \\
		& - \sum_{l\neq i }\int_0^s \big( m_{l} ^{[i]}m_{jkl} ^{[i]} +m_{jl} ^{[i]}m_{kl} ^{[i]}\big)(u)\frac{du}N
		\end{split}
		\end{equation}
and, by the product rule, that
		\begin{equation}\label{eq:ito2pt2}\begin{split}
		\big(m_{k}^{[i]}  m_{jk}^{[i  ]}\big) (s) -m_{k}^{(i)} m_{jk}^{(i  )} =&\;  \sum_{l\neq i } \int_0^s\sigma_i m_{jk} ^{[i]}(u)m_{kl} ^{[i]}(u) \,dg_{il}(u) \\
		& - \sum_{l\neq i }\int_0^s \big( m_{l}^{[i]}m_{jk} ^{[i]}m_{kl}^{[i]} \big)(u)\frac{du}N\\
		& +  \sum_{l\neq i } \int_0^s\sigma_i  m_{k}^{[i]}(u) m_{jkl} ^{[i]}(u) \,dg_{il}(u) \\
		& - \sum_{l\neq i }\int_0^s m_{k}^{[i]}(u)\big( m_{l} ^{[i]}m_{jkl} ^{[i]} +m_{jl} ^{[i]}m_{kl} ^{[i]}\big)(u)\frac{du}N\\
		& + \sum_{l\neq i } \int_0^s m_{kl}^{[i  ]}   (u) m_{jkl} ^{[i]}(u) \,\frac{du}N.
		\end{split}\end{equation}
Using Lemmas \ref{lm:L2bndmij}, \ref{lm:L2bndmijk} and the remarks following their proofs, it is simple to check that
		\[\begin{split}
		 & \sup_{s\in [0;t]} \sup_{\substack{k\neq i,j\\ \sigma_i =\pm1}}  \big\|       m_{kj}^{[i]}(s) -m_{kj}^{(i  )}   \big\|_2\\
		  & \quad+  \sup_{s\in [0;t]} \sup_{\substack{k\neq i,j\\ \sigma_i =\pm1}} \big\|   \big(m_{k}^{[i]} m_{kj}^{[i]} (s) -m_{k}^{(i)} m_{kj}^{(i  )} \big)(s)  \Big\|_2\leq \frac{C_\epsilon}{N^{1/2+\epsilon/4}}.
		\end{split}\]
Plugging these estimates into \eqref{eq:TAP2bnd1}, we conclude the lemma.
\end{proof}

We conclude this section with the proof of Theorem \ref{thm:main}.
\begin{proof}[Proof of Theorem \ref{thm:main}. ]
Lemma \ref{lm:hTAP1} establishes the hierarchical TAP equations \eqref{eq:hTAP1}, so it only remains to prove the bound \eqref{eq:hTAP2}. This is a simple consequence of Lemmas \ref{lm:L2bndmij}, \ref{lm:hTAP1} and \ref{lm:hTAP2}. Indeed, using Cauchy-Schwarz we find that
		\[\begin{split}  &\mathbb{E} \bigg[ m_{ij} -  \bigg(1- \tanh^2\Big(h + \sum_{k\neq i} g_{ik}m_k^{(i)}\Big)\bigg) \sum_{l\neq i} g_{il}m_{lj}^{(i)}  \bigg]^2\\
		&\leq C\, \mathbb{E} \bigg[  \bigg(m_i^2- \tanh^2\Big(h + \sum_{k\neq i} g_{ik}m_k^{(i)}\Big)\bigg) \sum_{l\neq i} g_{il}m_{lj}^{(i)}  \bigg]^2 + \frac{C}{N^{1+\epsilon/2}}\\
		&\leq C\,    \Big\|m_i- \tanh\Big(h + \sum_{k\neq i} g_{ik}m_k^{(i)}\Big)\Big\|_2 \Big\|\sum_{k\neq i} g_{ik}m_{kj}^{(i)} \Big\|_4^2 + \frac{C}{N^{1+\epsilon/2}}\leq \frac{C}{N^{1+\epsilon/4}}.
		 \end{split} \]
		
\end{proof}


\section{Proof of the TAP Equations}\label{sec:TAP}

In this section, we prove the bounds \eqref{eq:TAP1} and \eqref{eq:TAP2} from Corollary \ref{cor:main}.

\begin{proof}[Proof of Corollary \ref{cor:main}]

We begin with the proof of \eqref{eq:TAP1}. We have to compute the leading order contribution to
		\[ \sum_{k\neq i} g_{ik} \big(m_k- m_k^{(i)}\big) =:  \sum_{k\neq i} g_{ik} W_k.\]
To compute the leading order, we view $ W_k=W_k(g_{ik}) $ as a function of the coupling $g_{ik}$ and we do a second order Taylor expansion. This implies that
		\[\begin{split} W_k (g_{ik}) 
		&= W_k(g_{ik}=0) +  g_{ik}m_i(1-m_k^2)(g_{ik}=0) -g_{ik}(m_km_{ik})(g_{ik}=0) \\
		& \hspace{0.5cm} + g_{ik}^2\int_0^1ds_1  \int_0^{s_1} ds_2\; \big(\partial_{ik}^2m_k \big)(s_2g_{ik}).
		\end{split}\]
Setting $X_k:= W_k- m_i(1-m_k^2) g_{ik}$, we thus obtain that
		\[ \begin{split}
		X_k  &= W_k(g_{ik}=0) -g_{ik}(m_km_{ik})(g_{ik}=0) - g_{ik}^2\int_0^1 ds\, \big( \partial_{ik} (m_i(1-m_k^2))\big)(sg_{ik})\\
		 &\hspace{0.5cm}+ g_{ik}^2\int_0^1ds_1  \int_0^{s_1} ds_2\; \big(\partial_{ik}^2m_k \big)(s_2g_{ik}).
		 \end{split}\]
Next, let us prove that
 		\begin{equation}\label{eq:TAP1pf1}
		\mathbb{E} \bigg(\sum_{k\neq i} g_{ik} X_k\bigg)^2\leq \frac CN.
		\end{equation}
Since $| \partial_{ik} (m_i(1-m_k^2))(sg_{ik})|\leq C, |\partial_{ik}^2m_k (sg_{ik})|\leq C $ (uniformly in $s\in [0;1]$), we have
		\[\begin{split}
		\mathbb{E} \bigg(\sum_{k\neq i}   g_{ik}^3\int_0^1 ds\, \big( \partial_{ik} (m_i(1-m_k^2))\big)(s g_{ik}) \bigg)^2 &\leq C\, \mathbb{E}  \sum_{k,l\neq i} |g_{ik} |^3|g_{il} |^3 \leq \frac{C}N
		\end{split}\]
as well as
		\[\begin{split}
		\mathbb{E} \bigg(\sum_{k\neq i}   g_{ik}^3\int_0^1ds_1  \int_0^{s_1} ds_2\; \big(\partial_{ik}^2m_k \big)(s_2g_{ik})\bigg)^2 &\leq C\, \mathbb{E}  \sum_{k,l\neq i} |g_{ik} |^3|g_{il} |^3 \leq \frac{C}N.
		\end{split}\]
With the identities
		\begin{equation}\label{eq:TAP1pf2}
		\begin{split}
		\partial_{il} m_{k} &= m_i m_{kl} + m_l m_{ik} +m_{ilk}, \\
		\partial_{il} m_{ik} & = (1-m_i^2)m_{kl} -m_{il}m_{ik}- 2m_im_l m_{ik} - m_im_{ilk},
		\end{split}
		\end{equation}
we then obtain by Gaussian integration by parts
		\[\begin{split}
		&\mathbb{E} \bigg(\sum_{k\neq i}   g_{ik} W_k(g_{ik}=0)\bigg)^2\\
		&\hspace{1cm}= \frac{t}N\mathbb{E} \sum_{k\neq i}W_k^2(g_{ik}=0) + \frac{t^2}{N^2}\mathbb{E}\sum_{k,l\neq i}(\partial_{il} m_k(g_{ik}=0))(\partial_{ik} m_l(g_{il}=0)).
		\end{split}\]
Here, we used that $ \partial_{ik} \big( W_k(g_{ik}=0)\big) =0$ and that $\partial_{il} m_k^{(i)}=0$ (for all $k,l\in\{1,\dots,N\}$). Now, Eq. \eqref{eq:TAP1pf2} and Lemma \ref{lm:L2bndmij} together with the remarks thereafter show that
		\[\begin{split}
		&\frac{t^2}{N^2}\bigg| \mathbb{E}\sum_{k,l\neq i}(\partial_{il} m_k(g_{ik}=0))(\partial_{ik} m_l(g_{il}=0))\bigg|\\
		&\leq \frac{t^2}{N^2}\mathbb{E}\sum_{k,l\neq i} \big( m_i m_{kl} + m_l m_{ik} +m_{ilk}\big)^2(g_{ik}=0)\leq  \frac{C}N.
		\end{split}\]	
Notice that applying Lemma \ref{lm:L2bndmij} is enough to obtain the previous bound, because we can bound the $L^2(\mathbb{P})$ norms of the three point functions $m_{ikl}$ by the $L^2(\mathbb{P})$ norms of suitable two point functions, through the identity \eqref{eq:condid2}.

To estimate $ tN^{-1}\mathbb{E} \sum_{k\neq i}W_k^2(g_{ik}=0)$, on the other hand, we recall Eq. \eqref{eq:TAP2bnd0} so that
		\[\begin{split} W_k(g_{ik}=0) &= \Big(\big\langle m_k^{[i]}\big\rangle-m_k^{(i)}\Big)(g_{ik}=0)\\
		 &= \sum_{j\neq i, k} \bigg(\int_0^t \big\langle \sigma_i m_{jk}^{[i]}\big\rangle(s)\, dg_{ij}(s) -  \int_0^t \big\langle m_{j}^{[i]}m_{jk}^{[i]}\big\rangle(s)\frac{ds}N\bigg)(g_{ik}=0).
		 \end{split} \]
Hence, Lemma \ref{lm:L2bndmij} implies also in this case that
		\[ \frac{t}N\mathbb{E} \sum_{k\neq i}W_k^2(g_{ik}=0) \leq \frac CN\]
and, similarly, for the remaining contribution that
		\[\begin{split}
		\mathbb{E} \bigg(\sum_{k\neq i}   g_{ik}^2(m_km_{ik})(g_{ik}=0) \bigg)^2&\leq \mathbb{E} \bigg( \sum_{k\neq i}   g_{ik}^4\bigg)\bigg( \sum_{k\neq i}  m_{ik}^2(g_{ik}=0)\bigg)\\
		&=  \mathbb{E} \bigg( \sum_{k\neq i}   g_{ik}^4\bigg) \,\mathbb{E}\bigg( \sum_{k\neq i}  m_{ik}^2(g_{ik}=0)\bigg) \leq \frac{C}N.
		\end{split}\]
Collecting the previous bounds, we  conclude that
		\begin{equation}\label{eq:TAP1pf3}\begin{split}
		 \sum_{k\neq i} g_{ik}m_k^{(i)} &= \sum_{k\neq i} g_{ik}m_k - \sum_{k\neq i} g_{ik}^2 (1-m_k^2) m_i - \sum_{k\neq i}g_{ik}X_k\\
		 & = \sum_{k\neq i} g_{ik}m_k - t(1-q_N) m_i  - \sum_{k\neq i} \big(g_{ik}^2-t/N\big) (1-m_k^2) m_i - \sum_{k\neq i}g_{ik}X_k.
		 \end{split}\end{equation}
Here, the error term $\sum_{k\neq i}g_{ik}X_k$ satisfies the estimate \eqref{eq:TAP1pf1} and, arguing once more as above, we also find that
		\[\begin{split}
		&\mathbb{E}\bigg(  \sum_{k\neq i}\big( g_{ik}^2-t/N\big) (1-m_k^2) m_i\bigg)^2\\
		& = \mathbb{E}   \sum_{ k,l\neq i:k\neq l}\big( g_{ik}^2g_{il}^2-2g_{ik}^2t/N+t^2/N^2\big) (1-m_k^2) (1-m_l^2) m_i^2\\
		&\hspace{0.5cm}+\mathbb{E}   \sum_{ k\neq i }\big( g_{ik}^4 -2g_{ik}^2t/N+t^2/N^2\big) (1-m_k^2) ^2  m_i^2 \\
		& \leq  \mathbb{E}  \frac{t}N \sum_{ k,l\neq i:k\neq l}\big( g_{ik} g_{il}^2-2g_{ik}t/N \big) \partial_{ik}\Big[(1-m_k^2) (1-m_l^2) m_i^2\Big] \\
		&\hspace{0.5cm}+\mathbb{E}  \frac{t^2}{N^2} \sum_{k,l\neq i: k\neq l} g_{il} \partial_{il}  \Big[(1-m_k^2) (1-m_l^2) m_i^2\Big] + \frac{C}N\leq \frac{C}N.
		\end{split}\]
Note that the last bound follows from repeated Gaussian integration by parts and the fact that derivatives of $ (1-m_k^2) m_i $ are bounded by some constant $C>0$. By the Lipschitz continuity of $x\mapsto \tanh(x)$, this proves with Eq. \eqref{eq:TAP1pf3} the TAP equations \eqref{eq:TAP1}.

Let us now turn to the proof of the TAP equations \eqref{eq:TAP2} for the two point functions. We use the same ideas as for the proof of Eq. \eqref{eq:TAP1} and focus on the main steps. By Lemma \ref{eq:hTAP2}, we have to determine the leading order contribution to
		\[ \sum_{k\neq i }g_{ik}\big(m_{kj}- m_{kj}^{(i)} \big)=: \sum_{k\neq i }g_{ik} Y_k. \]
We view $Y_k= Y_k(g_{ik})$ as a function of $g_{ik}$ and a second order Taylor expansion yields
		\[\begin{split}
		Y_k(g_{ik}) &=  Y_k(g_{ik}=0) + g_{ik} \big( (1-m_k^2)m_{ij}  -2m_i m_k m_{kj}\big)(g_{ik}=0)\\
		&\hspace{0.5cm} - g_{ik}\big( m_{ik}m_{jk}+ m_k m_{ijk}\big)(g_{ik}=0)  + g_{ik}^2 \int_0^1 ds_1 \int_0^{s_1}ds_2\, \big(\partial_{ik}^2 m_{kj} \big)(s_2g_{ik}).
		\end{split}\]
Hence, defining $Z_k := Y_k - g_{ik}   (1-m_k^2)m_{ij}  + 2g_{ik} m_i m_k m_{kj}$, we find
		\begin{equation}\label{eq:defZk}\begin{split}
		Z_k &= Y_k(g_{ik}=0) - g_{ik}^2\int_0^1ds\, \partial_{ik}\big( (1-m_k^2)m_{ij}  -2m_i m_k m_{kj}\big)(sg_{ik})\\
		&\hspace{0.5cm} - g_{ik}\big( m_{ik}m_{jk}+ m_k m_{ijk}\big)(g_{ik}=0)  + g_{ik}^2 \int_0^1 ds_1 \int_0^{s_1}ds_2\, \big(\partial_{ik}^2 m_{kj} \big)(s_2g_{ik}).
		\end{split}\end{equation}
		
Now, in the first step, we prove that for all $\epsilon>0$ sufficiently small, it holds true that
		\begin{equation}\label{eq:TAP2pf1}
		\mathbb{E} \bigg(\sum_{k\neq i} g_{ik} Z_k\bigg)^2\leq \frac C{N^{1+\epsilon}}.
		\end{equation}
This follows from the decay results of Lemma \ref{lm:L2bndmij}, \ref{lm:L2bndmijk} and the remarks following their proofs. We start with the term
		\[\begin{split}
		&\mathbb{E} \bigg(\sum_{k\neq i} g_{ik}^3 \int_0^1ds\, \partial_{ik}\big( (1-m_k^2)m_{ij}  -2m_i m_k m_{kj}\big)(sg_{ik}) \bigg)^2\\
		&\leq C\sup_{s\in [0;1] }  \mathbb{E}  \sum_{k,l\neq i} |g_{ik}^3||g_{il}^3| |m_{ij}(sg_{ik})|^2  + C\sup_{s\in [0;1] }  \mathbb{E} \sum_{k,l\neq i,j} |g_{ik}^3||g_{il}^3||m_{kj}(sg_{ik})|^2+ \frac{C}{N^2}\\
		& \hspace{0.3cm}+ C\sup_{s\in [0;1] }  \mathbb{E}  \sum_{k,l\neq i,j} |g_{ik}^3||g_{il}^3| |\partial_{ik}m_{ij}(sg_{ik})|^2   + C\sup_{s\in [0;1] }  \mathbb{E} \sum_{k,l\neq i,j} |g_{ik}^3||g_{il}^3||\partial_{ik} m_{kj}(sg_{ik})|  + \frac{C}{N^2}\\
		&\leq \frac{C}{N^{3/2}},
		\end{split}\]
where we recall that we assume $i\neq j$ and where we used the identity
		\begin{equation}\label{eq:TAPpf2} \begin{split}
		\partial_{ik} m_{kj} & = -2 m_j \big( m_i m_{jk} + m_k m_{ij} +m_{ijk}\big)\big(\delta_{j} m_k^{[j]}\big)\\
		&\hspace{3cm} + (1-m_j^2) \delta_j\Big[ \big( 1-(m_k^{[j]})^2\big) m_i^{[j]} - m_k^{[j]}m_{ik}^{[j]} \Big]
		\end{split}\end{equation}
to control the terms involving $ \partial_{ik}m_{ij}$ and $\partial_{ik} m_{kj}$. Observe that Eq. \eqref{eq:TAPpf2} is a simple consequence of the conditional identity \eqref{eq:condid}. Notice also that, here and in the following, we frequently use rough bounds of the form $\mathbb{E}\, m_{ij}^4\leq C\, \mathbb{E}\,m_{ij}^2 $ so that all of the following bounds hold true for times $t<\log 2$.

Analogously to the last bound, we obtain that
		\[\begin{split}
		&\mathbb{E} \bigg(\sum_{k\neq i}g_{ik}^2\big( m_{ik}m_{jk}+ m_k m_{ijk}\big)(g_{ik}=0)\bigg)^2\\
		&\leq C \sum_{k,l\neq i,j}\mathbb{E}  g_{ik}^4  \,\mathbb{E} \big(m_{ik}^2m_{jk}^2\big)(g_{ik}=0)+C   \sum_{k,l\neq i,j} \mathbb{E} g_{ik}^4\, \mathbb{E} \big( m_{ijk}^2\big) (g_{ik}=0) + \frac{C}{N^2}\leq \frac{C}{N^{1+\epsilon}}.
		\end{split}\]

To bound the last contribution on the right hand side of Eq. \eqref{eq:defZk}, we differentiate the identity \eqref{eq:TAPpf2} and a tedious, but straight forward computation shows that
		\[\begin{split}
		\partial_{ik}^2 m_{kj} & = -2  \big( m_i m_{jk} + m_k m_{ij} +m_{ijk}\big)^2\big(\delta_{j} m_k^{[j]}\big)\\
		&\hspace{0.5cm} - 2 m_j \Big[ \partial_{ik}\big( m_i m_{jk} + m_k m_{ij} +m_{ijk}\big)\Big]\big(\delta_{j} m_k^{[j]}\big)\\
		&\hspace{0.5cm} -2 m_j \big( m_i m_{jk} + m_k m_{ij} +m_{ijk}\big) \Big[ \partial_{ik}\big(\delta_{j} m_k^{[j]}\big)\Big]\\
		&\hspace{0.5cm}  -2m_j\big( m_i m_{jk} + m_k m_{ij} +m_{ijk}\big) \delta_j\Big[ \big( 1-(m_k^{[j]})^2\big) m_i^{[j]} - m_k^{[j]}m_{ik}^{[j]}\Big]\\
		&\hspace{0.5cm} + (1-m_j^2)  \delta_j \Big[ -2 m_k^{[j]}\big(1-(m_k^{[j]})^2\big)\big(m_i^{[j]} \big)^2  -2 m_i^{[j]}\big(m_k^{[j]}\big)^2m_{ik}^{[j]}   \Big]\\
		&\hspace{0.5cm} -(1-m_j^2) \delta_j\Big[  \big(1-(m_k^{[j]})^2\big)m_i^{[j]}+ 2m_k^{[j]} (m_{ik}^{[j]})^2 + \big(1-(m_k^{[j]})^2\big)m_i^{[j]}m_{ik}^{[j]}  \Big].
		\end{split} \]
 If we then proceed as above, using the bounds from Lemmas \ref{lm:L2bndmij} and \ref{lm:L2bndmijk} combined with the product rule for the action of $\delta_j$ (in the last formula), we verify that
 		\[\begin{split}
		&\mathbb{E} \bigg( \sum_{k\neq i} g_{ik}^3 \int_0^1 ds_1 \int_0^{s_1}ds_2\, \big(\partial_{ik}^2 m_{kj} \big)(s_2g_{ik})\bigg)^2\\
		&\leq C \sup_{s\in [0;1]}\mathbb{E}   \sum_{k,l\neq i,j} g_{ik}^6   \big(\partial_{ik}^2 m_{kj} \big)^2(sg_{ik}) + \frac{C}{N^2}\leq \frac{C}{N^{3/2}}.
		\end{split}\]				
Finally, it remains to bound the first term on the right hand side in Eq. \eqref{eq:defZk}. We have
		\[\begin{split}
		&\mathbb{E} \bigg( \sum_{k\neq i} g_{ik} Y_k(g_{ik}=0)\bigg)^2\\
		&= \mathbb{E}\frac{t}N\sum_{k\neq i} Y_k^2(g_{ik}=0) + \mathbb{E}\frac{t^2}{N^2}\sum_{k,l\neq i:k\neq l} (\partial_{ik}m_{lj})(g_{il}=0)(\partial_{il}m_{kj})(g_{ik}=0)\\
		&\hspace{0.5cm} +  \mathbb{E}\frac{t^2}{N^2}\sum_{k\neq i} (\partial_{ik}m_{kj})^2(g_{ik}=0) \\
		& \leq \mathbb{E}\frac{t}N\sum_{k\neq i} Y_k^2(g_{ik}=0) + \mathbb{E}\frac{t^2}{N^2}\sum_{k,l\neq i:k\neq l} (\partial_{ik}m_{lj})^2(g_{il}=0) +\frac{C}{N^2},
		\end{split}\]
where we used Eq. \eqref{eq:TAPpf2} to obtain the estimate of the last line. Recalling the identity \eqref{eq:ito2pt}, it is furthermore straight forward to show that
		\[\mathbb{E}\frac{t}N\sum_{k\neq i} Y_k^2(g_{ik}=0) \leq \frac{C}{N^{1+\epsilon}}\]
and the smallness of the last contribution follows from the identity
		\begin{equation}\label{eq:TAPpf3}\begin{split}
		\partial_{ik}m_{lj} &= -2m_l \big(  m_i m_{kl} + m_k m_{il} +m_{ilk} \big) \big(\delta_l m_j^{[l]}\big) \\
		&\hspace{3cm}+(1-m_l^2)\delta_l \big(  m_i^{[l]} m_{kj} ^{[l]}+ m_k^{[l]} m_{ij}^{[l]} +m_{ijk}^{[l]} \big).
		\end{split}\end{equation}
It implies with the product rule for $\delta_l$ and the identity \eqref{eq:condid2} that
		\[\begin{split}
		 \mathbb{E}\frac{t^2}{N^2}\sum_{k,l\neq i:k\neq l} (\partial_{ik}m_{lj})^2(g_{il}=0)\leq \frac{C}{N^{1+\epsilon}}.
		\end{split}\]

Collecting the previous estimates, we summarize that we have shown that
		\[  \sum_{k\neq i} g_{ik}m_{kj}^{(i)}  = \sum_{k\neq i} g_{ik}m_{kj}  + 2\sum_{k\neq i} g_{ik}^2   m_{jk}m_km_i- \sum_{k\neq i} g_{ik}^2   (1-m_k^2)m_{ij}  - \sum_{k\neq i} g_{ik} Z_k,\]
where the error $ \sum_{k\neq i}g_{ik}Z_k$ satisfies the estimate \eqref{eq:TAP2pf1}. To conclude the TAP equations \eqref{eq:TAP2}, it thus only remains to replace $g_{ik}^2$ by its mean in the previous equation and to show that the resulting error is small. To this end, we apply once more the arguments from the previous steps to deduce that
		\[\begin{split}
		\mathbb{E} \bigg( \sum_{k\neq i} \big(g_{ik}^2-t/N\big)   m_{jk}m_km_i \bigg)^2 + \mathbb{E} \bigg( \sum_{k\neq i} \big(g_{ik}^2-t/N\big)  (1-m_k^2)m_{ij}\bigg)^2\leq \frac{C}{N^{1+\epsilon}}.
		\end{split}\]
We omit the details and conclude the proof of Corollary \eqref{cor:main}.
\end{proof}




\section{Overlap Concentration and Computation of $ \mathbb{E} \,m_{ij}^2$}\label{sec:conc}
In this section, we outline the proofs of Propositions \ref{prop:concentration} and \ref{prop:mij2norm}. Let us start with the proof of the concentration of the overlap, Eq. \eqref{eq:concentration}. This is a consequence of the TAP equations \eqref{eq:hTAP1} for the magnetizations $m_i$ and follows from a contraction argument.

\begin{proof}[Proof of Proposition \ref{prop:concentration}.]
Let $Z \sim \mathcal{N}(0,1)$ denote a standard Gaussian random variable, independent of the disorder $(g_{ij})_{1\leq i<j\leq N}$. We define $f: [0;\infty) \to [0;\infty)$ through	
		\[ f(x) =  \mathbb{E}_Z\tanh^2( h + \sqrt{tx}Z), \]
where $\mathbb{E}_Z$ denotes the expectation with respect to the randomness of $Z$. By Gaussian integration by parts, we find that
		\[ f'(x) = t\,\mathbb{E}_Z \frac{1- 2\sinh^2(h + \sqrt{tx}Z) }{\cosh^4( h + \sqrt{tx}Z)},\]
and therefore that
		\[ \sup_{x\in [0;\infty) }| f'(x)| \leq t\sup_{y\in [0;\infty) } \bigg| \frac{1- 2\sinh^2(y)}{\cosh^4( y)}\bigg|\leq t. \]
This follows from $ \cosh^2(y)\geq 1$ and
		\[ 2\tanh ^2(y) \leq 2 \leq \frac1{\cosh^2(y)} + \cosh^2(y). \]
In particular $f$ is Lipschitz continuous with Lipschitz constant bounded by $t<\log 2<1$.
		
Next, let us also recall that $q_N = N^{-1} \sum_{k=1}^N m_k^2$. By Eq. \eqref{eq:hTAP1}, we have that
		\[   m_i =  \tanh\Big( h + \sum_{k\neq i}g_{ik} m_k^{(i)}\Big) + \Phi_i, \]
where $ \mathbb{E}\,\Phi_i^2\leq C/N $. This implies, by symmetry, that
		\[\begin{split}
		 &\Big| \mathbb{E} \,q_N - \mathbb{E} \,\tanh^2\Big( h + \sum_{k\neq 1}g_{1k} m_k^{(1)}\Big) \Big| \leq \frac{C}{N^{1/2}}, \\
		 &\Big| \mathbb{E} \,q_N^2 - \mathbb{E} \,\tanh^2\Big( h + \sum_{k\neq 1}g_{1k} m_k^{(1)}\Big)\tanh^2\Big( h + \sum_{k\neq 2}g_{2k} m_k^{(2)}\Big) \Big| \leq \frac{C}{N^{1/2}}.
		 \end{split}\]
		
Now, proceeding as in Section \ref{sec:hTAP}, it is straight forward to verify that
		\[\begin{split}
		 &\mathbb{E} \,\bigg[  \tanh^2\Big( h + \sum_{k\neq 1}g_{1k} m_k^{(1)}\Big) - \tanh^2\Big( h + \sum_{k\neq 1,2}g_{1k} m_k^{( 1,2)}\Big)\bigg]^2\\
		 &\leq  \mathbb{E}\,\bigg[  g_{12}m_2^{(1)}+  \sum_{k\neq 1,2}g_{1k} \big( m_k^{(1)} - m_k^{( 1,2)}\big)\Big)\bigg]^2  \leq  \mathbb{E}\, \frac{C}N\sum_{k\neq 1,2} \big( m_k^{(1)} - m_k^{( 1,2)}\big)^2 + \frac{C}N\leq \frac{C}N,
		 \end{split}\]
where, by slight abuse of notation, we abbreviate from now on $ m_k^{( 1,2)}:= m_k^{(\{1,2\})}$. Observe that the last bound follows from It\^o's lemma applied to $(g_{2k})_{1\leq k\leq N}$. We have similarly
		\[\begin{split}
		 &\mathbb{E} \,\bigg[  \tanh^2\Big( h + \sum_{k\neq 2}g_{2k} m_k^{(2)}\Big) - \tanh^2\Big( h + \sum_{k\neq 1,2}g_{2k} m_k^{( 1,2)}\Big)\bigg]^2 \leq \frac{C}N.
		 \end{split}\]
Since the last two estimates can be proved with the same arguments as in Sections \ref{sec:corrdec} and \ref{sec:hTAP}, we skip the details. What they imply is that
		\[\begin{split}
		 &\Big| \mathbb{E} \,q_N - \mathbb{E} \,\tanh^2\Big( h + \sum_{k\neq 1}g_{1k} m_k^{(1)}\Big) \Big| \leq \frac{C}{N^{1/2}}, \\
		 &\Big| \mathbb{E} \,q_N^2 - \mathbb{E} \,\tanh^2\Big( h + \sum_{k\neq 1,2}g_{1k} m_k^{(1,2)}\Big)\tanh^2\Big( h + \sum_{k\neq 1,2}g_{2k} m_k^{(1,2)}\Big) \Big| \leq \frac{C}{N^{1/2}}.
		 \end{split}\]
Now, setting $ q_N^{(1)} = N^{-1}\sum_{k\neq 1}\big(m_k^{(1)}\big)^2$, we have (as observed in \cite[Lemma 1.7.6]{Tal1}) that
		\[ Z_1 :=   \big(tq_N^{(1)}\big)^{-1/2} \sum_{k\neq 1} g_{1k}m_k^{(1)} \sim \mathcal{N}(0,1) \]
is independent of $g_{kl}$ for all $k,l\neq 1$ (and hence unconditionally Gaussian). Therefore
		\[\mathbb{E} \,\tanh^2\Big( h + \sum_{k\neq 1}g_{1k} m_k^{(1)}\Big) = \mathbb{E}\, f \big(q_N^{(1)}\big).  \]
Similarly, defining $q_N^{(1,2)} = N^{-1}\sum_{k\neq 1,2}\big(m_k^{(1,2)}\big)^2$ as well as the Gaussians
		\[\begin{split}
		&Z_{12} :=   \big(tq_N^{(1,2)}\big)^{-1/2} \sum_{k\neq 1,2} g_{1k}m_k^{(1,2)}  \sim \mathcal{N}(0,1), \\
		&Z_{22} :=   \big(tq_N^{(1,2)}\big)^{-1/2} \sum_{k\neq 1,2} g_{2k}m_k^{(1,2)}  \sim \mathcal{N}(0,1),
		\end{split}\]
we easily see that
		\[ \mathbb{E}_{g_{1\bullet}g_{2\bullet}} Z_{12}^2 = 1, \hspace{0.5cm} \mathbb{E}_{g_{1\bullet}g_{2\bullet}}\,Z_{22}^2 = 1, \hspace{0.5cm} \mathbb{E}_{g_{1\bullet}g_{2\bullet}} \, Z_{12}Z_{22} = 0. \]
Here, $ \mathbb{E}_{g_{1\bullet}g_{2\bullet}}$ denotes the expectation conditionally on $g_{kl}$ for all $k,l\neq 1,2$. Thus, $ Z_{12}$ and $ Z_{22}$ are, conditionally on $g_{kl}$ for all $k,l\neq 1,2$, two i.i.d. standard Gaussians. Since their conditional statistics is deterministic, $ (Z_{12}, Z_{22})\sim \mathcal{N}(0, \textbf{1}_{\mathbb{R}^2 })$ is unconditionally jointly Gaussian, and independent of the remaining disorder $g_{kl}$ for all $k,l\neq 1,2$.

As in the previous step, we therefore find that
		\[ \begin{split}
		&\mathbb{E} \,\tanh^2\Big( h + \sum_{k\neq 1,2}g_{1k} m_k^{(1,2)}\Big)\tanh^2\Big( h + \sum_{k\neq 1,2} g_{2k} m_k^{(1,2)}\Big)\\
		& = \mathbb{E} \, \mathbb{E}_{g_{1\bullet}g_{2\bullet}} \,\tanh^2 \Big( h + \sqrt{t q_N^{(1,2)}}Z_{12}\Big)\tanh^2 \Big( h + \sqrt{t q_N^{(1,2)}}Z_{22}\Big)
		= \mathbb{E} \,f^2 \big(q_N^{(1,2)}\big)
		\end{split} \]
Finally, let us point out that the Lipschitz continuity of $f$ implies that
		\[\begin{split} \big| \mathbb{E} \,f \big( q_N^{(1)}\big) -\mathbb{E} \,f \big( q_N \big) \big| &\leq \big\| m_2-m_2^{(1)}  \big\|_2 + \frac{C}{N^{1/2}}\leq \frac{C}{N^{1/2}}, \\
		\big| \mathbb{E} \,f^2 \big( q_N^{(1,2)}\big) -\mathbb{E} \,f^2 \big( q_N \big) \big|&\leq 2 \big\| m_3-m_3^{(1)}  \big\|_2+ 2\big\| m_3^{(1)}-m_3^{(1,2)}  \big\|_2 + \frac{C}{N^{1/2}}\leq \frac{C}{N^{1/2}}.
		\end{split}\]

Collecting the above observations, we obtain that
		\[ \begin{split}
		\mathbb{E} \big| q_N - \mathbb{E}\,q_N\big|^2 \leq &\;\mathbb{E} \big( q_N - f(\mathbb{E}\, q_N)\big)^2 = \mathbb{E}\, q_N^2 -2 f(\mathbb{E}\,q_N) \mathbb{E} \,q_N + f^2(\mathbb{E}\,q_N)\\
		\leq &\; \mathbb{E} \, f^2( q_N) -2 f( \mathbb{E} \,q_N) \mathbb{E} \, f(q_N) + f^2(\mathbb{E}\,q_N) +\frac{C}{N^{1/2}} \\
		=&\; \mathbb{E} \big|f(q_N) - f(\mathbb{E} \,q_N)\big|^2 + \frac{C}{N^{1/2}}\leq \sup_{x\in [0;\infty)} |f'(x)|^2\, \mathbb{E} \big| q_N - \mathbb{E}\,q_N\big|^2 + \frac{C}{N^{1/2}}.
		\end{split}  \]
Since $\sup_{x\in [0;\infty)} |f'(x)|^2\leq t^2<1$, this proves that $q_N$ concentrates, i.e.
		\[\mathbb{E} \big|q_N - \mathbb{E}\,q_N\big|^2 \leq \frac{C}{N^{1/2}}.\]
Using again the Lipschitz continuity of $f$, it also shows that
 		\[   \big| \mathbb{E}\,q_N  - f\big(\mathbb{E}\,q_N \big)\big| \leq \big| \mathbb{E}\,f(q_N)  - f\big(\mathbb{E}\,q_N \big)\big|+ \frac{C}{N^{1/2}}\leq \frac{C}{N^{1/4}}.  \]
If $q \in [0;1]$ denotes the unique fixed point $ q = \mathbb{E}_Z \tanh^2(h+ \sqrt{tq}Z) = f(q)$ (for the uniqueness, see for instance \cite[Prop. 1.3.8]{Tal1} and recall that $t<1$), we conclude that
		\[ \big| q- \mathbb{E}\,q_N\big| \leq \big| f(q) - f(\,\mathbb{E}\,q_N)\big| + \frac{C}{N^{1/4}} \leq t\big| q- \mathbb{E}\,q_N\big| +\frac{C}{N^{1/4}}, \]
so that $\big| q- \mathbb{E}\,q_N\big|\leq C/ N^{1/4} $. This implies in particular \eqref{eq:concentration} and finishes the proof.
\end{proof}

Having proved the concentration of the overlap, let us now make the heuristics \eqref{eq:heuristic} rigorous in order to prove Proposition \ref{prop:mij2norm}. Before we start, we record that
		\begin{equation} \label{eq:concentrationLp} \mathbb{E} \big|q - q_N\big|^p \leq  \frac{C}{N^{1/2}} \end{equation}
for any $p\geq 2$, which follows by interpolation from the concentration bound \eqref{eq:concentration} and the boundedness of $q_N = N^{-1}\sum_{k=1}^Nm_k^2  \leq 1 $.

\begin{proof}[Proof of Proposition \ref{prop:mij2norm}.]
By the TAP equations \eqref{eq:hTAP2} and Gaussian integration by parts, we find that 	
		\[\begin{split}
		\mathbb{E}\, m_{ij}^2 =  &\, \mathbb{E} \,t  \,  \sech^{4} \Big(h + \sum_{k\neq i} g_{ik}m_k^{(i)} \Big)\frac1N \sum_{l\neq i} \big(m_{lj}^{(i)}\big)^2 \\
		&+ \mathbb{E}\frac{4t^2}{N^2}\bigg(\sum_{l\neq i}m_{l}^{(i)} m_{lj}^{(i)}  \bigg)^2 \, \frac{\big(4\sinh^2\big(h + \sum_{k\neq i} g_{ik}m_k^{(i)} \big)-1\big)}{\cosh^6\big(h + \sum_{k\neq i} g_{ik}m_k^{(i)} \big)}  + \Theta_1,
		\end{split}\]
where the error $ \Theta_1$ satisfies $| \Theta_1 |\leq C/N^{1+\epsilon}$, for $\epsilon >0$ sufficiently small. To estimate the first term in the second line, we use that $\sup_{x\in \mathbb{R}}\big| \frac{4\sinh^2(x)-1}{\cosh^6(x)} \big|  \leq C$ so that
		\begin{equation}\label{eq:app1}\begin{split}
		&\bigg|\mathbb{E}\frac{4t^2}{N^2}\bigg(\sum_{l\neq i}m_{l}^{(i)} m_{lj}^{(i)}  \bigg)^2 \, \frac{\big(4\sinh^2\big(h + \sum_{k\neq i} g_{ik}m_k^{(i)} \big)-1\big)}{\cosh^6\big(h + \sum_{k\neq i} g_{ik}m_k^{(i)} \big)} \bigg|\\
		& \leq \mathbb{E}\frac{4t^2}{N^2}  \bigg(\sum_{l\neq i} m_{l}^{(i)} m_{lj}^{(i)}  \bigg)^2 \bigg|\frac{\big(4\sinh^2\big(h + \sum_{k\neq i} g_{ik}m_k^{(i)} \big)-1\big)}{\cosh^6\big(h + \sum_{k\neq i} g_{ik}m_k^{(i)} \big)}\bigg|  \\
		&\leq C\,  \mathbb{E} \bigg(\frac1N\sum_{l\neq i,j} m_{l}^{(i)} m_{lj}^{(i)}  \bigg)^2 +\frac{C}{N^{3/2}},
		\end{split} \end{equation}
where the last bound follows from Lemma \ref{lm:L2bndmij}.

To continue, we control the first term on the right hand side of the last equation through another contraction argument. This term is an expectation over mixed correlation functions and we are going to show that this term is of lower order $o(N^{-1})$, as claimed in \eqref{eq:heuristic}. To make this rigorous, it is first of all useful to observe that
		\begin{equation}\label{eq:app2} \mathbb{E} \bigg(\frac1N\sum_{l\neq i,j} \Big[m_{l}  m_{lj}  -  m_{l}^{(i)} m_{lj}^{(i)}\Big]  \bigg)^2\leq \frac{C}{N^{1+\epsilon}}.\end{equation}
This can be proved using the results of Lemma \ref{lm:L2bndmij} and \ref{lm:L2bndmijk}, proceeding as in Section \ref{sec:hTAP} (recall in particular Eq. \eqref{eq:ito2pt2}); we omit the details. By Lemma \ref{lm:hTAP2}, we then see that
		\[ \mathbb{E} \bigg(\frac1N\sum_{l\neq i,j}  m_{l}  m_{lj}\bigg)^2 = \mathbb{E} \bigg(\frac1N\sum_{l\neq i,j}  m_{l}  (1-m_j^2)\sum_{k\neq j}g_{jk}m_{kl}^{(j)}  \bigg)^2 + \Theta_2 \]
with an error $\Theta_2$ such that $ |\Theta_2|\leq C/N^{1+\epsilon}$. Since we can pull the non--negative factor $ (1-m_j^2)\leq 1$ out of the summations, we find that
		\[ \begin{split}
		&\mathbb{E} \bigg(\frac1N\sum_{l\neq i,j}  m_{l}  (1-m_j^2)\sum_{k\neq j}g_{jk}m_{kl}^{(j)}  \bigg)^2\leq \mathbb{E} \bigg(\frac1N\sum_{l\neq i,j}  m_{l}  \sum_{k\neq j}g_{jk}m_{kl}^{(j)}  \bigg)^2\\
		& =\mathbb{E}\frac{t}{N^3}    \sum_{\substack{ l_1,l_2\neq i,j; \\ k\neq j } }  m_{l_1}m_{l_2}   m_{kl_1}^{(j)}m_{kl_2}^{(j)} + \mathbb{E}\frac{t^2}{N^4}   \sum_{\substack{ l_1,l_2\neq i,j; \\ k_1,k_2\neq j } } m_{k_1l_1}^{(j)}m_{k_2l_2}^{(j)}  \partial_{jk_1} \partial_{jk_2}\big(m_{l_1}m_{l_2} \big) \\
		& = \mathbb{E}\frac{t}{N^2}   \bigg(\sum_{\substack{ l\neq i,j,r  } }  m_{l}    m_{lr}^{(j)}\bigg)^2 +\mathbb{E} \frac{t^2}{N^4}\!\!\!  \sum_{\substack{ l_1,l_2\neq i,j; \\ k_1,k_2\neq j,l_1,l_2; \\k_1\neq k_2, l_1\neq l_2} } \!\!\! m_{k_1l_1}^{(j)}m_{k_2l_2}^{(j)} \partial_{jk_1} \partial_{jk_2}\big(m_{l_1}m_{l_2} \big)  +\Theta_3
		\end{split}\]
for an error $\Theta_3$ such that $ |\Theta_3| \leq C/N^{1+\epsilon}$ and some fixed $r\neq j $, by symmetry. But then, on the one hand, we can use Eq. \eqref{eq:TAP1pf2}, the identity \eqref{eq:condid2} and Eq. \eqref{eq:TAPpf3} to deduce that
		\[\begin{split}
		&\mathbb{E} \frac{t^2}{N^4}\!\!\!  \sum_{\substack{ l_1,l_2\neq i,j; \\ k_1,k_2\neq j,l_1,l_2; \\k_1\neq k_2, l_1\neq l_2 } } \!\!\! m_{k_1l_1}^{(j)}m_{k_2l_2}^{(j)} \partial_{jk_1} \partial_{jk_2}\big(m_{l_1}m_{l_2} \big)\\
		& =   \mathbb{E} \frac{2t^2}{N^4}\!\!\!  \sum_{\substack{ l_1,l_2\neq i,j; \\ k_1,k_2\neq j,l_1,l_2; \\k_1\neq k_2, l_1\neq l_2 } } \!\!\! m_{k_1l_1}^{(j)}m_{k_2l_2}^{(j)} \partial_{jk_1} \Big[ m_{l_1} \big( m_j m_{k_2l_2}+m_{k_2}m_{jl_2} + m_{jk_2l_2} \big) \Big]\leq \frac{C}{N^{1+\epsilon}}.
		\end{split}\]
On the other hand, we find with similar arguments as before that
		\[ \mathbb{E}    \bigg(\frac1N\sum_{\substack{ l\neq i,j,r  } }  m_{l}   \big(  m_{lr}-m_{lr}^{(j)}\big)\bigg)^2 \leq \frac{C}{N^{1+\epsilon}}.\]
Therefore, if we combine the previous bounds, we have shown that
		\[ \begin{split}
		\mathbb{E} \bigg(\frac1N\sum_{l\neq i,j}  m_{l}  m_{lj}\bigg)^2 &\leq t\,\mathbb{E}    \bigg(\frac1N\sum_{\substack{ l\neq i,j,r  } }  m_{l}    m_{lr}\bigg)^2+\frac{C}{N^{1+\epsilon}}\leq t\,\mathbb{E}    \bigg(\frac1N\sum_{\substack{ l\neq i,j  } }  m_{l}    m_{lj}\bigg)^2+\frac{C}{N^{1+\epsilon}},
		\end{split}\]
and we conclude under the assumption $t<\log 2<1$ that
		\[\mathbb{E} \bigg(\frac1N\sum_{l\neq i,j}  m_{l}  m_{lj}\bigg)^2 \leq \frac{C(1-t)^{-1}}{ N^{1+\epsilon}}\leq \frac{C}{N^{1+\epsilon}}. \]
By Eq. \eqref{eq:app2}, this also implies that
		\[\mathbb{E} \bigg(\frac1N\sum_{l\neq i,j}  m_{l}^{(i)}  m_{lj}^{(i)}\bigg)^2 \leq \frac{C}{N^{1+\epsilon}} \]
and plugging this into Eq. \eqref{eq:app1}, it follows that
		\[\bigg|\mathbb{E}\frac{4t^2}{N^2}\bigg(\sum_{l\neq i}m_{l}^{(i)} m_{lj}^{(i)}  \bigg)^2 \, \frac{\big(4\sinh^2\big(h + \sum_{k\neq i} g_{ik}m_k^{(i)} \big)-1\big)}{\cosh^6\big(h + \sum_{k\neq i} g_{ik}m_k^{(i)} \big)} \bigg|\leq  \frac{C}{N^{1+\epsilon}}.\]	
This proves
		\[\begin{split}
		\mathbb{E}\, m_{ij}^2 =  &\, \mathbb{E} \,t  \,  \sech^{4} \Big(h + \sum_{k\neq i} g_{ik}m_k^{(i)} \Big)\frac1N \sum_{l\neq i} \big(m_{lj}^{(i)}\big)^2  +\Theta_4,
		\end{split}\]
for an error $ | \Theta_4| \leq C/N^{1+\epsilon}$, for any $\epsilon >0$ sufficiently small.

The rest of the proof of \eqref{eq:2normmij} follows now from a repetition of the arguments above. First, the concentration of $q_N^{(i)} $ (recall that $q_N^{(i)}$ and $q_N$ are close in $L^2(\mathbb{P})$) implies that
		\[\begin{split}
		\mathbb{E}\, m_{ij}^2 &=   \mathbb{E} \,t  \,  \sech^{4} \Big(h + \sum_{k\neq i} g_{ik}m_k^{(i)} \Big)\frac1N \sum_{l\neq i} \big(m_{lj}^{(i)}\big)^2  +\Theta_4\\
		&=   \mathbb{E} \,t  \,  \sech^{4} \big(h +  \sqrt{tq} Z \big)\, \mathbb{E}\,\frac1N \sum_{l\neq i} \big(m_{lj}^{(i)}\big)^2  +\Theta_5
		\end{split}\]
for $Z\sim \mathcal{N}(0,1)$ independent of the remaining disorder and an error $  | \Theta_5| \leq C/N^{1+\epsilon}$. Here, we have used the Lipschitz continuity of the map
		\[ x\mapsto \mathbb{E}_Z\sech^{4} \Big(h +  \sqrt{tx} Z \Big) \]
and, choosing $\delta>0$ sufficiently small, the bound
		\[\begin{split} &\mathbb{E}  \, \mathbb{E}_Z\,|m_{lj}^{(i)} |^2\Big[ \sech^{4} \Big(h +  \sqrt{tq_N^{(i)}} Z \Big) -  \sech^{4} \Big(h +  \sqrt{tq} Z \Big)  \Big] \\
		&\hspace{6cm} \leq C \big\|m_{lj}^{(i)}\big\|_{2+\delta}^{2} \big \|q_N^{(i)}-q\|_{(2+\delta)/\delta}^{\delta/(2+\delta)}\leq \frac{C}{N^{1+\epsilon}}
		  \end{split}\]
for $\epsilon=\epsilon_{\delta}>0$ small enough, by Lemma \ref{lm:L2bndmij} and Eq. \eqref{eq:concentrationLp} (applied to $q_N^{(i)}$).

Replacing then the $m_{lj}^{(i)} $ by $m_{lj}$ through It\^o's Lemma as in Section \ref{sec:hTAP} and using symmetry over the sites shows that
		\[\mathbb{E}\, m_{ij}^2 = \frac1N \mathbb{E} \,t  \,  \sech^{4} \big(h +  \sqrt{tq} Z \big) \mathbb{E}\, (1-m_j^2)^2 + \mathbb{E} \,t  \,  \sech^{4} \big(h +  \sqrt{tq} Z \big)\,\mathbb{E}\,m_{ij}^2 + \Theta_6 \]
for an error $ |\Theta_6| \leq C/N^{1+\epsilon}$. Finally, since
		\[ \Big| \mathbb{E}\, (1-m_j^2)^2 - \mathbb{E}   \,  \sech^{4} \Big(h +  \sqrt{tq} Z \Big)\Big|\leq  \frac{C}{N^{1/4}} \]
by the TAP equations \eqref{eq:hTAP1} and very similar arguments as above, we conclude
		\[\mathbb{E}\, m_{ij}^2 = \frac{t}N \bigg[1- \mathbb{E}  \,  \frac{t}{\cosh^{4} \big(h +  \sqrt{tq} Z \big) }\bigg]^{-1}  \bigg[ \mathbb{E}  \,  \frac{1}{\cosh^{4} \big(h +  \sqrt{tq} Z \big) }\bigg]^2   + \Theta_7\]
for an error $ |\Theta_7| \leq C/N^{1+\epsilon}$. This concludes the proof of Proposition \ref{prop:mij2norm}.
\end{proof}

\end{document}